\documentclass[journal]{IEEEtran}

\pagenumbering{gobble}
\usepackage{xspace}
\usepackage[overload]{empheq}
\usepackage{cleveref} 
\usepackage{amsmath}
\usepackage{amssymb}
\usepackage{amsthm}
\usepackage{mathptmx}
\usepackage{cite}
\usepackage{setspace}
\usepackage[font={small}]{caption}

\usepackage{dsfont}
\usepackage{color}
\usepackage[lofdepth,lotdepth]{subfig}
\usepackage{stmaryrd}
\usepackage{todonotes}

\DeclarePairedDelimiter\ceil{\lceil}{\rceil}
\DeclarePairedDelimiter\floor{\lfloor}{\rfloor}
\usepackage[linesnumbered,ruled,vlined]{algorithm2e} 

\makeatletter
\newcommand{\removelatexerror}{\let\@latex@error\@gobble}
\makeatother

\DeclareMathOperator*{\argmax}{arg\,max}
\DeclareMathOperator*{\argmin}{arg\,min}

\usepackage{tikz}
\usetikzlibrary{shapes}
\usetikzlibrary{graphs}
\usetikzlibrary{calc}

\usepackage{tkz-berge}

\newtheorem{proposition}{Proposition}
\newtheorem{definition}{Definition}

\newtheorem{example}{Example}
\newtheorem{lemma}{Lemma}

\newtheorem{corrollary}{Corrollary}

\title{A Dynamic Game Approach to Designing Secure Interdependent IoT-Enabled Infrastructure Network}

\author{Juntao Chen, ~\IEEEmembership{Member,~IEEE}, Corinne Touati, and Quanyan Zhu, ~\IEEEmembership{Member,~IEEE} 
\thanks{This paper has been accepted for publication in \textit{IEEE Transactions on Network Science and Engineering}.}
\thanks{This work was supported in part by the National Science of Foundation under Grants ECCS-1847056, CNS-1544782, CNS-2027884, and SES-1541164, and in part by the Army Research Office (ARO) under Grant W911NF-19-1-0041.} 
\thanks{Juntao Chen is with the Department of Computer and Information Sciences, Fordham University,  New York,  NY 10023 USA. E-mail: jchen504@fordham.edu}
\thanks{Corinne Touati is with INRIA, F38330 Montbonnot Saint-Martin, France. E-mail: corinne.touati@inria.fr}
\thanks{Quanyan Zhu is with the Department of Electrical and Computer Engineering, Tandon School of Engineering, New York University, Brooklyn, NY 11201, USA. E-mail: qz494@nyu.edu}
}

\begin{document}

\maketitle

\begin{abstract}
The emerging Internet of Things (IoT) applications that leverage ubiquitous connectivity and big data are facilitating the realization of smart everything initiatives. IoT-enabled infrastructures have naturally a multi-layer system architecture with an overlaid or underlaid device network and its coexisting infrastructure network. The connectivity between different components in these two heterogeneous interdependent networks plays an important role in delivering real-time information and ensuring a high-level situational awareness. However, IoT-enabled infrastructures face cyber threats due to the wireless nature of communications. Therefore, maintaining network connectivity in the presence of adversaries is a critical task for infrastructure network operators. In this paper, we establish a three-player three-stage dynamic game-theoretic framework including two network operators and one attacker to capture the secure design of multi-layer interdependent infrastructure networks by allocating limited resources. We use subgame perfect Nash equilibrium (SPE) to characterize the strategies of players with sequential moves. In addition, we assess the efficiency of the equilibrium network by comparing with its team optimal solution counterparts in which two network operators can coordinate. We further design a scalable algorithm to guide the construction of the equilibrium IoT-enabled infrastructure networks. Finally, we use case studies on the emerging paradigm of the Internet of Battlefield Things (IoBT) to corroborate the obtained results.
\end{abstract}

\begin{IEEEkeywords}
Heterogeneous Networks, Internet of Things, Interdependency,  Cybersecurity, Dynamic Game
\end{IEEEkeywords}

\section{Introduction}
The massive deployment of Internet of Things (IoT) technologies provides ubiquitous connectivity for heterogeneous machines and devices for data collection, information exchange and operational decision-making. Therefore, IoT is widely adopted in various application domains especially in the infrastructures including smart grids, smart homes, intelligent transportations, and smart cities \cite{gubbi2013internet}. 
With the current information and communication technologies (ICTs), an IoT-enabled infrastructure network has its own networking platform that is interoperable within the existing Internet infrastructure. Hence, an IoT-enabled infrastructure can be naturally viewed as a two-layer interdependent network consisting of the infrastructure layer network and the overlaid or underlaid device layer network. For instance, in the Internet of Battlefield Things (IoBT), the soldier networks equipped with wearable devices are integrated with the unmanned aerial vehicle (UAV) ad hoc networks to perform tasks. The connections in the two-layer network architecture can be classified into two types: (i) the \textit{interlinks} by which devices/infrastructures communicate between themselves as well as (ii) the \textit{intralinks} by which devices communicate with the infrastructure.

The connectivity of an IoT-enabled infrastructure network plays an important role in information dissemination and real-time decision-making for mission-critical operations. Note that devices can communicate with each other or with infrastructures to maintain a global situational awareness of the network. Furthermore, the IoT devices which are scarce of on-board computational resources can outsource heavy computations to the data centers through cloud computing infrastructure \cite{Kumar}. IoT-enabled infrastructures are often vulnerable to cyberattacks which can degrade the system performance, since most of the communications within the IoT networks are wireless in nature. For example, in IoBT networks, the communications between a soldier and a UAV relay node can be jammed by an attacker, and a soldier thus becomes isolated and loses information and awareness of the battlefield.

Therefore, to protect the IoT-enabled infrastructure from adversarial behaviors, it is imperative to design secure and robust two-layer networks that can maintain connectivity despite of link failures. 
Due to the heterogeneous and two-layer feature of the network, the design of the network is decentralized essentially. Specifically, the  network design involves two players who design their own subnetworks sequentially. As in IoBT networks, UAVs form their own relay networks, while a team of soldiers forms a network based on the knowledge of UAV locations to maintain the communications among soldiers and command and control stations. The objectives of these two network operators are to maintain the connectivity of the global network by considering network creation cost, while an attacker aims to disconnect the network at the minimum attacking cost. 

In this paper, we use a three-player three-stage game to capture the secure interdependent IoT-enabled infrastructure network design. At the first stage, the network operator 1 creates links by anticipating the behavior of the network operator 2 and the adversary. At the second stage, the network operator 2 observes the links created by operator 1 and forms links to secure the network by anticipating the adversarial behaviors. Finally, the adversary observes the whole network created by the two operators and launches an attack targeting to disconnect the network. The two operators have aligned objectives to make the two-layer network connected. However, they have different costs or capabilities in forming communication links. For example, creating links between UAVs can be more expensive than links between soldiers as the distance between UAVs can be much longer. In addition, the differences in network creativity and the ordering of the two network players can affect the outcome of the designed network. For clarity, we present an example in Fig. \ref{example_intro} to illustrate the dynamic game model considered in this work.

\begin{figure}[!t]
\begin{centering}
\includegraphics[width=1\columnwidth]{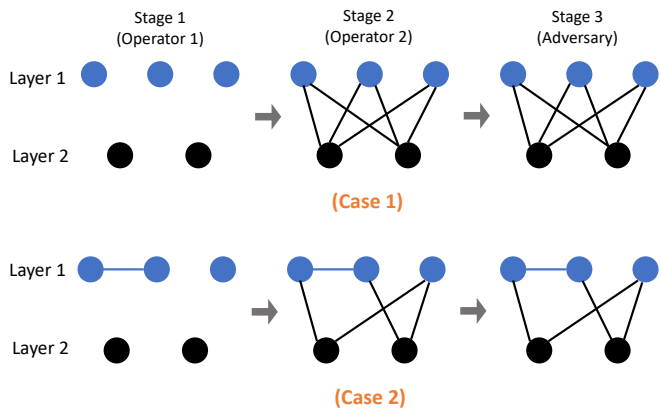}
\par\end{centering}     
\caption{\label{example_intro}
Illustration of the three-player three stage game.  Operator 1 can construct links within layer 1 and between two layers, while operator 2 can establish links within layer 2 and between two layers. Case 1: Operator 1 anticipates that operator 2 has the capability to establish at least 6 links and the attacker is able to disrupt 1 link. Thus, operator 1 does not create link at the first stage, and operator 2 creates 6 links at the second stage, where the global network is resistant to any 1 link removal attack. At stage 3, the adversary does not attack as 1 link removal cannot disconnect the network. Case 2: Operator 2 is able to construct at most 5 links. Thus, operator 1 needs to construct 1 link at the first stage by anticipating 1 link removal, and operator 2 creates 4 links at the second stage. The attacker does not compromise any link} and the network is connected at stage 3.
\end{figure}

We adopt subgame perfect Nash equilibrium (SPE) \cite[Chapter~5]{Osborne} as the solution concept to the three-player sequential IoT-enabled infrastructure network design game. We first observe that the SPE of the  game results in a $k$-connected graph  if the network remains connected at equilibrium. To understand the efficiency of the Nash equilibrium network, we use a centralized network design problem as a benchmark in which both operators coordinate and design an optimal secure network as a team.  
We further observe that the price of anarchy (PoA) can be unbounded in general cases. However, when two subnetworks contain the same number of nodes and the unitary costs of creating links are the same with only a non-null strategy of operator 2 at SPE, then the PoA is 1, which means that the decentralized network design is as efficient as that of centralized one. 
Some counter-intuitive results are further presented in Section \ref{counter_results}, e.g., the payoff of operator 1 is unique at SPE while operator 2's may vary.
 Finally, we use case studies on IoBT to illustrate the design principles of secure infrastructure networks. We observe that with a higher threat level, the two network operators prefer more collaborations to secure the IoBT network.
 
 The contributions of this paper are summarized as follows:
\begin{enumerate}
\item We propose a three-player three-stage dynamic game  to investigate the secure design of interdependent IoT-enabled infrastructure networks. By considering costs for creating and compromising links, the network operators aim to maintain the network connected while the attacker's goal is to disconnect the network.
\item We adopt subgame perfect Nash equilibrium (SPE) as the solution concept for the established game, and analyze the SPE systematically including the optimal strategies of two network operators and the attacker through backward induction.
\item We quantify the efficiency of the SPE by comparing it with the team optimal solution in which two network operators can coordinate. Furthermore, we derive a number of meaningful results including the non-unicity of equilibrium cost of network operators and the price of seniority quantifying the benefits of playing first in the dynamic game.
\item We design a scalable algorithm for constructing the secure interdependent IoT-enabled infrastructure network. In addition, we use case studies of IoBT to illustrate the derived design principles.
\end{enumerate}

\subsection{Organization of the Paper}
The rest of the paper is organized as follows. Section \ref{related_work_sec} presents related works. Section \ref{formulation} introduces the  IoT-enabled infrastructures framework and formulates the dynamic game problem. Equilibrium analysis of the game and structural results are presented in Section \ref{analysis}. Section \ref{construct_alg} designs an algorithm to guide the secure solution network construction. Case studies on IoBT networks are provided in Section \ref{case_study}, and Section \ref{conclusion} concludes this paper.

\section{Related Work}\label{related_work_sec}
With the increasing amount of cyber threats, security becomes a critical concern for IoT-enabled infrastructure networks \cite{brown2006defending,lewis2014critical,chen2019interdependent,alpcan2010network}. The infrastructure networks face various types of attacks in terms of the potential of cyber attackers \cite{Nia}. For example, attackers can target the edge computing nodes in IoT including sensor nodes. Some typical adversarial scenarios include node replication attack \cite{Parno} and DoS attack \cite{zhang2018attack}. The attackers can also launch attacks through the IoT communication networks.  In the eavesdropping attack, the attacker captures the private information over the channel, and utilizes the information to design other tailored attacks \cite{Mukherjee}. In the data injection attack, the attacker injects fraudulent packets into IoT communication links through insertion, manipulation, and replay techniques \cite{Zhou4625802}.  In our work, we focus on the adversarial attacks which lead to communication link removal in the IoT-enabled infrastructure networks.

Different methodologies have been proposed in previous works to address the cyber threats in the IoT and critical infrastructures \cite{Abomhara}. In  \cite{chen2017,zhang2017bi,khalili2018designing},
contract and insurance mechanisms have been adopted to mitigate the cyber risks with a consideration of dependencies between different entities. In
\cite{xiao2018iot}, the authors have provided a comprehensive review on the machine learning-based security schemes in the IoT systems. Another widely adopted approach for investigating infrastructure network security is through dynamic games. For example, the authors in \cite{wang2012stochastic} have proposed stochastic game nets for stochastic games representation with Petri nets to study the security analysis for enterprise networks. \cite{chen2019dynamic} has proposed a dynamic game model including pre-attack defense and post-attack recovery phases in designing resilient IoT-enabled infrastructure networks.  In \cite{nguyen2009stochastic}, the authors have adopted a two-player stochastic game to investigate the network security in which the nodes' security assets and vulnerabilities are correlated. \cite{chen2019TCNS} has investigated optimal design of two-layer IoT network with heterogeneous security considerations at different layers. In our work, we develop a three-player three-stage dynamic game framework and study the operators' and attacker's strategies in defending and compromising the network sequentially.

Infrastructure network connectivity is critical in information exchange in many civilian and military applications. Regarding the technical aspect, we investigate the secure design of IoT-enabled infrastructure network by considering the connectivity \cite{Marcin,chen2020control,Christophe} using graph theory \cite{gross2004handbook} and dynamic games \cite{Osborne,basar1999dynamic}. Different from \cite{chen2019dynamic} which has adopted a dynamic game approach in designing secure infrastructure, this work extends the single network paradigm to various network operators by considering the heterogeneity of multi-layer networks. Comparing with  previous works \cite{Marcin,Christophe} that have focused on a single-layer adversarial network design, we establish a two-layer heterogeneous network framework and characterize the decentralized decision-making of each network operator explicitly. Furthermore, the current work extends our previous one \cite{chen2017heterogeneous} in multiple aspects.  First, we include much more analytical results with investigations on the properties of the game in Section \ref{analysis}. Specifically, more results on the efficiency of equilibrium are presented in Section \ref{efficiency_subsection}, and the findings in both Sections \ref{counter_results} and \ref{collaboration_results} are completely new. Second, we provide a complete analysis of the algorithm including its complexity and scalability in Section \ref{construct_alg}. Third, we include detailed proofs of all theoretical results which were omitted in \cite{chen2017heterogeneous}.  Fourth, we extensively expand the introduction and related work sections as well as the case studies with more examples and discussions to corroborate the obtained results.

\section{Interdependent IoT-Enabled Infrastructure Model and Problem Formulation}\label{formulation}

We consider two infrastructure network operators and two sets of nodes ${\cal N}_1$ and ${\cal N}_2$, where nodes represent the devices and infrastructures in the IoT-enabled network. The first operator controls nodes in ${\cal N}_1$ and as such can create wireless communication links between those nodes as well as links connecting a node in ${\cal N}_1$ to one in ${\cal N}_2$. Similarly, the second operator controls nodes in ${\cal N}_2$ and can create links except those in ${\cal N}_1$. For convenience, we define the following notations:
\begin{itemize}
\item $E_1$ is the set of possible links between nodes of ${\cal N}_1$, that is $E_1 = \{e_1 = (n_a, n_b): n_a \in {\cal N}_1, \, n_b \in {\cal N}_1, n_a\neq n_b\}$.
\item $E_2$ is the set of possible links between nodes of ${\cal N}_2$, that is $E_2 = \{e_2 = (n_a, n_b): n_a \in {\cal N}_2, \, n_b \in {\cal N}_2,  n_a\neq n_b\}$.
\item $E_{1,2}$ is the set of possible links between nodes of ${\cal N}_1$ and ${\cal N}_2$, that is $E_{1,2} = \{e_{1,2} = (n_a, n_b): n_a \in {\cal N}_1, \, n_b \in {\cal N}_2\}$.
\end{itemize}
Further, the number of nodes in two networks are equal to $|\mathcal{N}_1|=n_1$ and $|\mathcal{N}_2|=n_2$, respectively, where $|\cdot|$ denotes the cardinality of a set. Note that the created links by both network operators are assumed to be undirected, i.e., two nodes can communicate bidirectionally if they are connected with a link.

The adversarial IoT-enabled infrastructure network formation consists of three stages which are as follows.
\begin{itemize}
\item[1)] At round $1$, operator $1$ has the choice of creating a set of communication links in $E_1 \cup E_{1,2}$.
\item[2)] At round $2$, operator $2$ can create a set of communication links in $E_2 \cup E_{1,2}$.
\item[3)] At round $3$, an adversary can remove a set of communication links, e.g., through jamming attacks, that have been created during the previous two rounds.
\end{itemize}

Note that the order of play between two network operators are determined before the game begins. As shown in Section \ref{counter_results} later, playing first or second impacts the utility of two network operators in this dynamic game.

A network is a pair $(\cal N, \cal E)$, with $\cal N$ a set of nodes and $\cal E$ a set of edges, or links between two nodes. At round $1$, starting from an empty network $({\cal N}, \emptyset)$, with ${\cal N} = {\cal {N}}_1\cup \mathcal{N}_2$, operator $1$ creates a set ${\cal E}_1:={\cal E}_1^1\cup {\cal E}_1^{1,2}$ of links and thus designs network $G_1 = ({\cal N}, {\cal E}_1)$ such that ${\cal E}_1$ is a subset of $E_1 \cup E_{1,2}$, the set of admissible links for operator $1$, i.e., ${\cal E}_1^1 \subseteq E_1$ and ${\cal E}_1^{1,2}\subseteq E_{1,2}$.
Then, at round $2$, starting from network $G_1$, operator $2$ creates a set ${\cal E}_2:={\cal E}_2^2\cup {\cal E}_2^{2,1}$ of links and thus designs network $G_2 = ({\cal N}, {\cal E}_1 \cup {\cal E}_2)$ such that ${\cal E}_2^2 \subseteq E_2$ and ${\cal E}_2^{2,1} \subseteq E_{1,2}$.
Finally, at round $3$, the adversary chooses a subset of the links ${\cal E}_A \subseteq {\cal E}_1 \cup {\cal E}_2$ that it removes from $G_2$, resulting in network $G_3 = ({\cal N}, {\cal E}_1 \cup {\cal E}_2 \backslash {\cal E}_A)$. We denote the total number of nodes as $|{\cal N}|=n_1+n_2 = n$.

 One metric to measure the performance of the IoT-enabled infrastructure network is its connectivity defined as follows.

\begin{definition}[Network Connectivity]
An infrastructure network $G=(\mathcal{N},\mathcal{E})$ is said to be connected if every node in $\mathcal{N}$ can be reached from any others through a sequence of links from $\mathcal{E}$.
\end{definition}

The goal of the operators is to construct a connected IoT-enabled infrastructure network, i.e., a network where every node can be reached from any others through a sequence of links. Conversely, the role of the adversary is to obtain a disconnected infrastructure network, and thus a node or a group of nodes becomes not accessible to the rest of the network. Note that the adopted binary network connectivity metric is suitable for mission-critical IoT-enabled infrastructures, e.g., IoBT. In these types of systems, the connectivity among agents is the minimum requirement and thus of first priority of the network operator.

In IoT-enabled infrastructure network, both creating and removing links are costly. Let $c_{1}$ and $c_{2}$ be the normalized unitary costs for creating a link for operators $1$ and $2$ in $E_1$ and $E_2$, respectively, and $c_{1,2}$ and $c_{2,1}$ be their corresponding normalized unitary costs for creating a link in $E_{1,2}$. In addition, $c_{A}$ is the normalized unitary cost of the adversary to compromise a link.  Let $\mathds{1}_G$ be the indicator factor that equals $1$ if network $G$ is connected and $0$ otherwise. Then, the payoffs of operators $1$, $2$ and the adversary are, respectively,
\begin{equation}\label{utility}
\begin{split}
U_{1}({\cal E}_1, {\cal E}_2, {\cal E}_A) &= \mathds{1}_{({\cal N}, {\cal E}_1 \cup {\cal E}_2 \backslash {\cal E}_A)} - c_{1} |{\cal E}_1^1| - c_{1,2} |{\cal E}_1^{1,2} |,\\
U_{2}({\cal E}_1, {\cal E}_2, {\cal E}_A) &=\mathds{1}_{({\cal N}, {\cal E}_1 \cup {\cal E}_2 \backslash {\cal E}_A)} - c_{2} |{\cal E}_2^2| - c_{2,1} |{\cal E}_2^{2,1}|,\\
U_{A}({\cal E}_1, {\cal E}_2, {\cal E}_A) &= 1-\mathds{1}_{({\cal N}, {\cal E}_1 \cup {\cal E}_2 \backslash {\cal E}_A)} - c_{A} |{\cal E}_A|.
\end{split}
\end{equation}
In \eqref{utility}, we normalize the cost of creating and attacking links with respect to the unitary network connectivity measure. Note that these constant cost parameters are directly related to the monetary cost of creating and attacking links. To avoid trivial solutions, the unitary costs $c_1$, $c_2$, $c_{1,2}$, $c_{2,1}$, and $c_A$ admit a value between 0 and 1. Both operators adopt the same type of communication technology (e.g., WiFi, Zigbee, LTE, NB-IoT) to construct the links. Thus, both intralinks and interlinks in the two-layer network have an identical level of security. This fact leads to a unified unitary attack cost for the adversary, regardless of the type of communication links (i.e., intralink or interlink). Extension of the framework to include heterogeneous attack costs is also possible. One direction is to consider that the operators can construct secure (with protection mechanism) and insecure links. The attacker is incapable of compromising those secure links or very costly to do so. We leave this nontrivial extension as future work.

In this work, we are interested in seeking the subgame perfect Nash equilibrium (SPE) of the three-stage dynamic game. SPE is a refinement of  Nash equilibirum (NE) by focusing on the sequential strategic decision-making of players, comparing with NE that treats all players' strategies as choices made once before the play begins. An SPE has the property that in \textit{no subgame} can any player benefit by unilaterally deviating from the SPE strategy.  Another critical property is that every SPE is an NE (but not vise versa), as SPE is a more restricted solution concept. Furthermore, SPE is a strategy profile that induces an NE in every subgame. A more comprehensive discussion of SPE can be found in \cite[Chapter~5]{Osborne} and \cite[Chapter~5]{fujiwara2015non}. In this regard, SPE is an appropriate solution concept for our three-stage game. Specifically, we seek for optimal strategies of the three players as follows.
\begin{definition}[SPE]
The SPE of the investigated three-player three-stage game is a strategy profile $\{{\cal E}_1^{\text{MAX}}, {\cal E}_2^{\text{MAX}},{\cal E}_A^{\text{MAX}}\}$ that satisfy the following constraints.
\begin{itemize}
\item[1)] Given network $G_2$, the adversary chooses the optimal set ${\cal E}_A^{\text{MAX}}({\cal E}_1, {\cal E}_2) $ that maximizes its utility ${\cal E}_A^{\text{MAX}}({\cal E}_1, {\cal E}_2) \in \argmax_{{\cal F}} \{U_{A}({\cal E}_1, {\cal E}_2, {\cal F})\}$.
\item[2)] Given network $G_1$, operator $2$ chooses the optimal set ${\cal E}_2 ^{\text{MAX}}({\cal E}_1)$ that maximizes its utility ${\cal E}_2^{\text{MAX}}({\cal E}_1) \in \argmax_{{\cal F}} \{U_{2}({\cal E}_1, {\cal F}, {\cal E}_A^{\text{MAX}}({\cal E}_1, {\cal F}))\}$.
\item[3)] Operator $1$ chooses the optimal set ${\cal E}_1^{\text{MAX}}$ that maximizes its utility ${\cal E}_1^{\text{MAX}} \in \argmax_{{\cal F}} \{U_{1}({\cal F}, {\cal E}_2^{\text{MAX}} ({\cal F}), {\cal E}_A^{\text{MAX}}({\cal F}, {\cal E}_2^{\text{MAX}} ({\cal F})))\}$.
\end{itemize}
\end{definition}

By convention, the adversary attacks the network when $c_{A} |{\cal E}_A|=1$ and $\mathds{1}_{({\cal N}, {\cal E}_1 \cup {\cal E}_2 \backslash {\cal E}_A)}=0$ at SPE. This consideration makes sense, especially in mission-critical scenarios, as the attacker's goal is to disrupt the network connectivity. Attacking the network with success can still be an incentive-compatible strategy for the adversary, even if it yields a zero net utility. However, this convention is not strict, i.e., the adversary may not attack the network in this boundary case. To keep the notation simple during the SPE analysis (e.g., characterization of $k$ in \eqref{notation-k}), we follow such convention. Note that the developed theoretical methods in later sections are valid under both considerations. In contrast, the operators will not secure the network if $c_{1} |{\cal E}_1^1| + c_{1,2} |{\cal E}_1^{1,2}|=1$, $c_{2} |{\cal E}_2^2| + c_{2,1} |{\cal E}_2^{2,1}|=1$, and $\mathds{1}_{({\cal N}, {\cal E}_1 \cup {\cal E}_2 \backslash {\cal E}_A)}=1$. Therefore, the SPE yields the equilibrium topology of the two-layer adversarial IoT-enabled infrastructure networks. 

\textit{Remark:} We next comment on the information structure of game. In this dynamic game, the adversary is the follower and his optimal actions are best responses to the network constructed by the two operators after the first two steps. Thus, the adversary does not necessarily know the exact link creation costs of two network operators but reacts to the operators' strategy profile ${\cal E}_1\cup {\cal E}_2$ optimally by maximizing $U_A$. In comparison, both operators need to know the threat level of the network captured by the unitary attack cost $c_A$. With a smaller $c_A$, the operators strategies are more conservative by anticipating more attacks. When $c_A$ is not precisely known by operators, there are two possible ways to address this challenge. The first one is that, since $c_A$ is directly related to the number of attacks, then the operators can estimate $c_A$ based on the distribution of number of attacks using historical data. The second approach is to model this unknown information directly using a parameter $\theta$, representing the uncertain type of the adversary. Then, the operators determine their optimal strategies conditioned on the random type parameter $\theta$. This yields a new dynamic Bayesian game which is nontrivial and worth of investigation in the future work.

\section{SPE Analysis and Analytical Results}\label{analysis}
In this section, we analyze the formulated three-player three-stage IoT-enabled infrastructure network formation game  in Section \ref{formulation} with a focus on the derivation of its SPE. 
\subsection{Backward Induction}\label{backward_induction}

To derive the SPE, we proceed by backward induction, i.e., we first compute the optimal strategy for the adversary, then operator $2$ and finally operator $1$.

\begin{definition}[$p$-Connected Network]
 A network $G=(\cal N,\cal E)$ is called a $p$-connected network, if (i) it remains connected after the deletion of any $p$ links, and (ii) there exists a set $\cal F$ of $p+1$ links ($|\mathcal F| = p+1$) so that the network $(\mathcal{N},\mathcal{E}\backslash \cal{F})$ is disconnected.
\end{definition}

\textit{Remark:} Any connected network $G=(\cal N,\cal E)$ is a $p$-connected network for some value of $1 \leq p \leq |\cal E|$. By convention, we say that a non-connected network is $(-1)$-connected. The value $p$ is called the \textit{link connectivity} of the network. In addition, a network is \textit{$p$-resistant} if it remains connected after the deletion of $p$ links, that is, if it is $m$-connected for some $m \geq p$. 

For notational clarity, in the following, we denote
\begin{equation}\label{notation-k}
k := \floor*{1/c_A}.
\end{equation}
Note that the floor operator $\floor*{x}$ and the ceiling operator $\ceil*{x}$ denote the largest integer no greater than $x$ and the smallest integer no less than $x$, respectively. As shown in the analysis of following Lemma \ref{lem:1}, $k$ can be interpreted as the maximum number of links that the adversary may attack at the SPE. The following result presents the strategy of the adversary.

\begin{lemma}\label{lem:1}
Let ${\cal E}_1$ and ${\cal E}_2$ be played by operator $1$ and $2$ respectively.  
Then, the adversary's optimal strategy ${\cal E}_A^{\text{MAX}}$ is:
\begin{itemize}
\item[1)] $\emptyset$ if $({\cal N}, {\cal E}_1 \cup {\cal E}_2)$ is not connected,
\item[2)] $\emptyset$ if $({\cal N}, {\cal E}_1 \cup {\cal E}_2)$ is $m$-connected with $m \geq k$,
\item[3)] any ${\cal F}$ such that $\mathds{1}_{({\cal N}, {\cal E}_1 \cup {\cal E}_2\backslash \cal F)}=0$ and $\displaystyle |{\cal F}| = m+1$ if $({\cal N}, {\cal E}_1 \cup {\cal E}_2)$ is $m$-connected with $m<k$.
\end{itemize} 
\end{lemma}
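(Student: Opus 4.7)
The plan is to argue by backward induction on the adversary's utility and then enumerate the three possible states of the network $G_2 = ({\cal N}, {\cal E}_1 \cup {\cal E}_2)$ handed to him at round $3$. Since the adversary's payoff $U_A = 1 - \mathds{1}_{({\cal N}, {\cal E}_1 \cup {\cal E}_2 \backslash {\cal E}_A)} - c_A |{\cal E}_A|$ depends only on whether the resulting graph is still connected and on the number of links attacked, the analysis boils down to comparing two canonical choices in each case: attack nothing (so pay no cost), or attack the smallest set of links that actually disconnects the network.

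First, I would dispense with the trivial case where $G_2$ is already disconnected: the indicator is then $0$ for every ${\cal E}_A$, so $U_A = 1 - c_A |{\cal E}_A|$ is maximized by ${\cal E}_A = \emptyset$. Next, for the two non-trivial cases I would use the $p$-connected definition to pin down the minimum attack size. By definition of $m$-connected, any subset of $m$ or fewer links leaves $G_2$ connected, and there exists a witness set of exactly $m+1$ links whose removal disconnects it. Hence any disconnecting ${\cal E}_A$ satisfies $|{\cal E}_A| \ge m+1$, and the candidate that maximizes $U_A$ among disconnecting strategies has exactly $m+1$ links (adding more can only raise the cost). Thus the adversary really has only two serious options: ${\cal E}_A = \emptyset$ yielding $U_A = 0$, or a disconnecting set of size $m+1$ yielding $U_A = 1 - c_A(m+1)$.

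I would then compare these two values using the definition $k = \lfloor 1/c_A \rfloor$, which is equivalent to $c_A k \le 1 < c_A(k+1)$. In case~2, $m \ge k$ gives $c_A(m+1) \ge c_A(k+1) > 1$, so the disconnecting option yields strictly negative utility, and ${\cal E}_A = \emptyset$ is optimal. In case~3, $m < k$ gives $m+1 \le k$, hence $c_A(m+1) \le c_A k \le 1$, so the disconnecting option yields $U_A \ge 0$, which matches or beats doing nothing. A short sub-argument then handles non-disconnecting attacks of intermediate size: if $|{\cal E}_A| > 0$ and the graph stays connected, then $U_A = -c_A|{\cal E}_A| < 0$, strictly worse than $\emptyset$, so these strategies are never optimal.

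The main subtlety, and the only point that needs explicit care, is the boundary case $c_A(m+1) = 1$ in case~3 (possible only when $1/c_A$ is an integer and $m = k-1$): there the two candidate payoffs tie at $0$. Here the tie-breaking convention stated right after the SPE definition, namely that the adversary attacks whenever $c_A|{\cal E}_A| = 1$ coincides with $\mathds{1} = 0$, is invoked to select the disconnecting set, so the statement $|{\cal F}| = m+1$ in item~3 remains correct. Once this convention is cited, the three cases combine to give exactly the characterization in the lemma.
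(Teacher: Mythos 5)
Your proposal is correct and follows essentially the same route as the paper's proof: dispose of the disconnected case, show that non-empty non-disconnecting attacks are strictly dominated, bound the payoff of any disconnecting attack by $1-c_A(m+1)$ via the definition of $m$-connectedness, and compare against the null strategy using $k=\lfloor 1/c_A\rfloor$. Your explicit treatment of the tie at $c_A(m+1)=1$ via the stated convention is slightly more careful than the paper's terse ``iff $m+1\le 1/c_A$'' phrasing, but it is the same argument.
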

\begin{proof}
Note that since $1-\mathds{1}_{({\cal N}, {\cal E}_1 \cup {\cal E}_2)} \in \{0,1\}$ then the utility of the adversary is upper bounded by $1$. 
Further, if $\mathds{1}_{({\cal N}, {\cal E}_1 \cup {\cal E}_2)} = 0$, i.e., if $({\cal N}, {\cal E}_1 \cup {\cal E}_2)$ is not connected, then $U_{A}({\cal E}_1, {\cal E}_2, {\cal E}_A) = 1$ if and only if (iff) $|{\cal E}_A| = 0$, which is thus the (only) optimal strategy.

Assume now that $\mathds{1}_{({\cal N}, {\cal E}_1 \cup {\cal E}_2)} = 1$. Note that $U_{A}({\cal E}_1, {\cal E}_2, \emptyset) = 1-\mathds{1}_{({\cal N}, {\cal E}_1 \cup {\cal E}_2)}=0$. Thus, the optimal strategy of the adversary is not the empty set iff there exists an ${\cal F} \neq \emptyset$ such that $ \mathds{1}_{({\cal N}, {\cal E}_1 \cup {\cal E}_2 \backslash {\cal F})} = 0$ and $U_{A}({\cal E}_1, {\cal E}_2, {\cal F}) >0$. 

Let $m$ be the connectivity of network $ ({\cal N}, {\cal E}_1 \cup {\cal E}_2)$. Let ${\cal F}$ be a non-empty set such that $({\cal N}, {\cal E}_1 \cup {\cal E}_2\backslash \cal F)$ is connected. Then, $U_{A}({\cal E}_1, {\cal E}_2, {\cal F}) = - c_A|{\cal F}| <0$. 
Thus this strategy is strictly dominated by the null strategy and therefore is not optimal. Reciprocally, let ${\cal F}$ be such that $({\cal N}, {\cal E}_1 \cup {\cal E}_2\backslash \cal F)$ is disconnected. Then $U_{A}({\cal E}_1, {\cal E}_2, {\cal F}) = 1- c_A|{\cal F}| \leq 1-(m+1) c_A$. 
Thus, the null strategy is optimal iff $k \leq 1/c_A < m+1$, and a non-null strategy is optimal iff $m+1 \leq 1/c_A < k+1$, i.e., if $m<k$.
\end{proof}

In a similar vein, we can detail the optimal strategy of operator $2$ as follows.

\begin{lemma}\label{lem:2}
Let ${\cal E}_1$ be played by operator $1$. 
Then, the operator 2's optimal strategy ${\cal E}_2^{\text{MAX}}$ in the IoT-enabled infrastructure is:
\begin{itemize}
\item[1)] $\emptyset$ if $({\cal N}, {\cal E}_1)$ is $k$-connected;
\item[2)] Otherwise, let $F$ be the set of sets of $E_2 \cup E_{1,2}$ such that for each element ${\cal F}$ of $F$, network $({\cal N}, {\cal E}_1 \cup {\cal F})$ is $k$-connected. If $F$ is not empty, we consider its element ${\cal F}$ that has the minimal cost, that is the set of links $\displaystyle {\cal F} = {\cal F}_2\cup {\cal F}_{2,1}$, where $\{ {\cal F}_2, {\cal F}_{2,1}\}= \argmin_{{\cal A}_2, {\cal A}_{2,1}} \{ c_2 |{\cal A}_2| + c_{2,1} |{\cal A}_{2,1}|,\ \text{s.t. } {\cal A}_2 \subseteq E_2, \, {\cal A}_{2,1} \subseteq E_{1,2} \}$.
\begin{itemize}
\item If $F = \emptyset$ or $c_2 |{\cal F}_2| + c_{2,1} |{\cal F}_{2,1}| \geq 1$, then the optimal strategy of operator 2 is the null strategy and the resulting payoff is $0$.
\item Otherwise (i.e. $F$ is not empty and $c_2 |{\cal F}_2| + c_{2,1} |{\cal F}_{2,1}| < 1$), then the optimal payoff of operator $2$ is $1-c_2 |{\cal F}_2| - c_{2,1} |{\cal F}_{2,1}|$, and an optimal strategy is ${\cal F}$.
\end{itemize}
\end{itemize} 
\end{lemma}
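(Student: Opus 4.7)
The plan is to apply Lemma 1 to reduce operator 2's optimization to a clean comparison between two alternatives. First, I would partition operator 2's feasible strategy space into class A, consisting of all ${\cal E}_2 \subseteq E_2 \cup E_{1,2}$ for which $({\cal N}, {\cal E}_1 \cup {\cal E}_2)$ is $m$-connected with $m \geq k$, and class B, consisting of the remaining strategies (including those that leave the network disconnected). By Lemma 1, in class A the adversary best-responds with $\emptyset$, so $\mathds{1}_{({\cal N}, {\cal E}_1 \cup {\cal E}_2 \backslash {\cal E}_A)} = 1$ and $U_2 = 1 - c_2 |{\cal E}_2^2| - c_{2,1} |{\cal E}_2^{2,1}|$. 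In class B the adversary either does nothing on an already disconnected network or optimally disconnects an under-connected one; in both cases the indicator equals $0$, so $U_2 = -c_2 |{\cal E}_2^2| - c_{2,1} |{\cal E}_2^{2,1}|$.

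Next I would optimize within each class. In class B the objective is non-positive and strictly decreases with the link count, so the optimum (whenever the class is non-empty) is ${\cal E}_2 = \emptyset$ with payoff $0$; note that $\emptyset$ lies in class B exactly when $({\cal N}, {\cal E}_1)$ is not $k$-connected. In class A the objective is maximized by minimizing $c_2 |{\cal E}_2^2| + c_{2,1} |{\cal E}_2^{2,1}|$ subject to $({\cal N}, {\cal E}_1 \cup {\cal E}_2)$ being $k$-connected, which is precisely the definition of ${\cal F} = {\cal F}_2 \cup {\cal F}_{2,1}$ in the statement; such a minimizer exists iff $F \neq \emptyset$, and its payoff is $1 - c_2 |{\cal F}_2| - c_{2,1} |{\cal F}_{2,1}|$.

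The final step is the top-level case split. If $({\cal N}, {\cal E}_1)$ is already $k$-connected, then $\emptyset$ belongs to class A and delivers payoff $1$, which dominates every other option and gives item~1. Otherwise, operator 2 compares $0$ (the best class-B payoff) against $1 - c_2 |{\cal F}_2| - c_{2,1} |{\cal F}_{2,1}|$ (the best class-A payoff, when defined): the former wins when $F = \emptyset$ or the augmentation cost is at least $1$, and the latter wins otherwise, matching the two sub-cases of item~2 exactly.

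The main subtlety I anticipate is careful bookkeeping of boundary cases rather than any deep argument. I will verify that $\emptyset$ is the unique class-B optimum by a direct monotonicity argument in the link count, handle the edge case $F = \emptyset$ where no augmentation can reach $k$-connectivity so that class A is empty, and propagate the adversary's tie-breaking convention from Lemma 1 consistently so that operator 2's stated strategy is a genuine argmax. The min-cost augmenting set ${\cal F}$ need not be unique, but any valid choice yields the same payoff, which justifies the lemma's phrasing of ``an optimal strategy.''
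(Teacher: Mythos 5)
Your proof is correct and follows exactly the route the paper intends: the paper gives no explicit proof of this lemma, saying only that it follows ``in a similar vein'' to Lemma~\ref{lem:1}, and your argument---partitioning operator 2's strategies by whether the resulting network is $k$-resistant, invoking Lemma~\ref{lem:1} for the adversary's response in each class, and then comparing the class optima---is the natural completion of that sketch. Your handling of the boundary cases (the tie at augmentation cost exactly $1$, the emptiness of $F$, and reading the lemma's ``$k$-connected'' as ``$m$-connected with $m\geq k$'' consistent with Lemma~\ref{lem:1}) matches the paper's stated conventions.
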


This result leads us finally to the optimal strategy for operator $1$ in the IoT-enabled infrastructure networks.

\begin{lemma}\label{lem:3}
 Let ${\cal G}$ be the set of $k$-connected networks. (Note that ${\cal G}$ is not empty iff $k+1 \leq n-1$.) Any network of ${\cal G}$ can be written in the form $({\cal N}, {\cal E}_1^1 \cup {\cal E}^{1,2}_{1} \cup {\cal E}_2^2  \cup {\cal E}^{2,1}_{2})$ with ${\cal E}_1^1 \subseteq E_1$, ${\cal E}_2^2 \subseteq E_2$, ${\cal E}^{1,2}_{1} \subseteq E_{1,2}$ and ${\cal E}^{2,1}_{2} \subseteq E_{1,2}$. Now, let ${\cal \tilde{G}} \subseteq {\cal G}$ be the subset of $k$-connected networks that lead to positive utilities for operator $1$ and $2$, that is networks ${\cal \tilde{G}}=({\mathcal N},{\cal \tilde{E}}_1^1\cup {\cal \tilde{E}}_2^2\cup  {\cal \tilde{E}}^{1,2}_{1}\cup {\cal \tilde{E}}^{2,1}_{2})$ such that $c_1 |{\cal \tilde{E}}_1^1| + c_{1,2}{\cal |\tilde{E}}^{1,2}_{1}| < 1$ and $c_2 |{\cal \tilde{E}}_2^2| + c_{2,1} |{\cal \tilde{E}}^{2,1}_{2}| < 1$. Then, the optimal strategy of the first operator ${\cal E}_1^{\text{MAX}}$ is:
\begin{itemize}
\item[1)] $\emptyset$ if ${\cal \tilde{G}} = \emptyset$, and the associated payoff is $0$.
\item[2)] the elements of ${\cal \tilde{G}}$ that have the minimal value of $c_1 |{\cal \tilde{E}}_1^1| + c_{1,2}{\cal |\tilde{E}}^{1,2}_{1}|$ otherwise.
\end{itemize} 
\end{lemma}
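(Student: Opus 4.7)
The plan is to apply backward induction using Lemmas~\ref{lem:1} and \ref{lem:2} to compute operator $1$'s payoff as a function of its own strategy $\mathcal{E}_1 = \mathcal{E}_1^1 \cup \mathcal{E}_1^{1,2}$, and then partition the strategy space according to the three possible regimes for the downstream play.

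First, I would fix an arbitrary $\mathcal{E}_1$ and examine operator $2$'s best response together with the adversary's best response. By Lemma~\ref{lem:2}, three cases arise: (a) $(\mathcal{N},\mathcal{E}_1)$ is already $k$-connected, so operator $2$ plays $\emptyset$; (b) $(\mathcal{N},\mathcal{E}_1)$ is not $k$-connected but admits a completion to a $k$-connected network with operator-$2$ cost strictly below $1$, in which case operator $2$ plays the min-cost such completion; (c) no such affordable completion exists, in which case operator $2$ plays $\emptyset$ and Lemma~\ref{lem:1} implies the resulting network is either disconnected already or gets disconnected by the adversary. In cases (a) and (b) the connectivity indicator equals $1$ and operator $1$'s realized payoff is $1 - c_1|\mathcal{E}_1^1| - c_{1,2}|\mathcal{E}_1^{1,2}|$; in case (c) the indicator vanishes and operator $1$'s payoff is $-c_1|\mathcal{E}_1^1| - c_{1,2}|\mathcal{E}_1^{1,2}| \le 0$, with equality iff $\mathcal{E}_1 = \emptyset$.

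Second, I would translate these observations into the $\tilde{\mathcal{G}}$ description. If operator $1$ plays $\mathcal{E}_1 = \emptyset$, its payoff is $0$. Any other strategy is either weakly dominated by $\emptyset$ (case (c)) or results in a $k$-connected final network in which both operators' costs are strictly below $1$, i.e., the final network is in $\tilde{\mathcal{G}}$. Conversely, given any element of $\tilde{\mathcal{G}}$ written as $({\cal N}, \tilde{\mathcal{E}}_1^1 \cup \tilde{\mathcal{E}}_1^{1,2} \cup \tilde{\mathcal{E}}_2^2 \cup \tilde{\mathcal{E}}_2^{2,1})$, if operator $1$ plays $\tilde{\mathcal{E}}_1^1 \cup \tilde{\mathcal{E}}_1^{1,2}$, then by Lemma~\ref{lem:2} operator $2$ picks the minimum-cost completion, whose cost is at most $c_2|\tilde{\mathcal{E}}_2^2| + c_{2,1}|\tilde{\mathcal{E}}_2^{2,1}| < 1$ and therefore is played. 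Hence operator $1$ actually receives payoff $1 - c_1|\tilde{\mathcal{E}}_1^1| - c_{1,2}|\tilde{\mathcal{E}}_1^{1,2}|$, and the set of attainable positive payoffs for operator $1$ is exactly $\{\, 1 - c_1|\tilde{\mathcal{E}}_1^1| - c_{1,2}|\tilde{\mathcal{E}}_1^{1,2}| : \text{element of } \tilde{\mathcal{G}} \,\}$. Maximizing over this set gives the characterization in the lemma, and if $\tilde{\mathcal{G}} = \emptyset$ then no positive payoff is reachable, so the null strategy (payoff $0$) is optimal. The side remark that $\mathcal{G} \neq \emptyset$ iff $k+1 \le n-1$ follows from the fact that the complete graph on $n$ nodes is $(n-1)$-connected and is the densest admissible graph.

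The main obstacle I anticipate is the bookkeeping in the converse direction: I must verify that operator $2$'s actual best response to operator $1$'s play may differ from the operator-$2$ component $\tilde{\mathcal{E}}_2^2 \cup \tilde{\mathcal{E}}_2^{2,1}$ used to witness membership in $\tilde{\mathcal{G}}$, yet the payoff to operator $1$ is unchanged. This is handled by the min-cost clause of Lemma~\ref{lem:2}, which guarantees that the actual response has cost no larger than the witness, hence still strictly below $1$, and the connectivity indicator in operator $1$'s utility depends only on $k$-connectivity of the final network, not on which specific completion is used.
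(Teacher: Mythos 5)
Your argument is correct and is exactly the backward-induction route the paper intends (the paper states Lemma~\ref{lem:3} without an explicit proof, relying on Lemmas~\ref{lem:1} and~\ref{lem:2} just as you do); in particular you correctly isolate and resolve the one delicate point, namely that operator $2$'s actual best response may differ from the witness completion but, by the min-cost clause of Lemma~\ref{lem:2}, still costs less than $1$ and yields a $k$-connected network, so operator $1$'s realized payoff is unaffected. Two cosmetic slips worth fixing: a strategy of operator $1$ with own cost $\geq 1$ that still leads to a connected final network falls outside your stated dichotomy (it is weakly dominated by $\emptyset$ yet is not your case (c), and the final network is not in $\tilde{\mathcal{G}}$), though your subsequent characterization of the attainable \emph{positive} payoffs already subsumes it; and in the paper's shifted convention a $p$-connected network has link connectivity $p+1$, so $K_n$ is $(n-2)$-connected rather than $(n-1)$-connected, which is why the nonemptiness threshold reads $k+1\le n-1$ and not $k+1\le n$.
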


From Lemmas~\ref{lem:1}, \ref{lem:2}, and \ref{lem:3}, we can finally deduce the SPE.
\begin{lemma}\label{lem:4}
 Let ${\cal G}$ be the set of $k$-connected networks. Now, let ${\cal \tilde{G}} \subseteq {\cal G}$ be the subset of $k$-connected networks that lead to positive utilities for operator $1$ and $2$, that is networks $\cal \tilde{G}$ such that $c_1 |{\cal \tilde{E}}_1^1| + c_{1,2}{\cal |\tilde{E}}^{1,2}_{1}| < 1$ and $c_2 |{\cal \tilde{E}}_2^2| + c_{2,1} |{\cal \tilde{E}}^{2,1}_{2}| < 1$. We obtain the following results.
\begin{itemize}
\item[1)] If ${\cal \tilde{G}} = \emptyset$, then the optimal strategy for operator $1$ and $2$ and adversary are empty sets and the resulting utilities are $U_1(\emptyset, \emptyset, \emptyset) = U_2(\emptyset, \emptyset, \emptyset) = 0$ and $U_A(\emptyset, \emptyset, \emptyset) = 1$.
\item[2)] Otherwise, the optimal strategies of operator $1$ are the elements of ${\cal \tilde{G}}$ that have minimal value of $c_1 |{\cal \tilde{E}}_1^1| + c_{1,2}{\cal |\tilde{E}}^{1,2}_1|$. Then, if ${\cal \tilde{E}}_1^1 \cup {\cal \tilde{E}}^{1,2}_{1}$ is the strategy of operator $1$, the optimal strategies of operator $2$ are the elements of ${\cal \tilde{G}}$ of the form $(\mathcal{N}, {\cal \tilde{E}}_1^1  \cup {\cal \tilde{E}}^{1,2}_{1} \cup {\cal \tilde E}_2^2 \cup {\cal \tilde E}^{2,1}_{2})$ with ${\cal \tilde E}_2^2 \subseteq E_2$ and ${\cal \tilde E}^{2,1}_{2} \subseteq E_{1,2}$ that minimizes $c_2 |{\cal \tilde{E}}_2^2| + c_{2,1} |{\cal \tilde{E}}^{2,1}_{2}|$. Finally, the optimal strategy for the adversary is the null strategy leading to $U_A({\cal \tilde{E}}_1^1 \cup {\cal \tilde E}^{1,2}_{1}, {\cal \tilde{E}}_2^2 \cup {\cal \tilde E}^{2,1}_{2}, \emptyset) = 0$. 
\end{itemize}
\end{lemma}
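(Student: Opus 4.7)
The plan is to prove Lemma~\ref{lem:4} by straightforward backward induction, chaining together the three preceding lemmas. Lemmas~\ref{lem:1}, \ref{lem:2}, and \ref{lem:3} already characterise the best response of each player at their own stage; Lemma~\ref{lem:4} merely packages the terminal outcome of this induction, splitting into the two cases ${\cal \tilde{G}} = \emptyset$ and ${\cal \tilde{G}} \neq \emptyset$.

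\textbf{Case 1 (${\cal \tilde{G}} = \emptyset$).} I apply Lemma~\ref{lem:3} directly to conclude that operator $1$'s optimal strategy is $\emptyset$. With ${\cal E}_1=\emptyset$, the network $({\cal N},\emptyset)$ is disconnected and hence not $k$-connected, so in the setting of Lemma~\ref{lem:2} the set $F$ of completions making the union $k$-connected is either empty, or all of its elements have operator-$2$ cost $\geq 1$ (otherwise there would exist an element of ${\cal \tilde{G}}$ with ${\cal \tilde{E}}_1^1={\cal \tilde{E}}^{1,2}_{1}=\emptyset$, contradicting ${\cal \tilde{G}}=\emptyset$). Hence Lemma~\ref{lem:2} gives ${\cal E}_2^{\text{MAX}}=\emptyset$ with payoff $0$. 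Finally, Lemma~\ref{lem:1}(1) applied to the empty (disconnected) network gives ${\cal E}_A^{\text{MAX}}=\emptyset$ and $U_A(\emptyset,\emptyset,\emptyset)=1-\mathds{1}_{({\cal N},\emptyset)}=1$, while $U_1=U_2=0$ follows from \eqref{utility}.

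\textbf{Case 2 (${\cal \tilde{G}} \neq \emptyset$).} I apply Lemma~\ref{lem:3} to obtain that the optimal strategies of operator $1$ are exactly those elements of ${\cal \tilde{G}}$ minimising $c_1 |{\cal \tilde{E}}_1^1| + c_{1,2}|{\cal \tilde{E}}^{1,2}_{1}|$. Fix any such optimum and let ${\cal \tilde{E}}_1^1\cup{\cal \tilde{E}}^{1,2}_{1}$ denote the corresponding choice. By definition of ${\cal \tilde{G}}$, there exist ${\cal \tilde{E}}_2^2\subseteq E_2$ and ${\cal \tilde{E}}^{2,1}_{2}\subseteq E_{1,2}$ completing it into a $k$-connected network with $c_2|{\cal \tilde{E}}_2^2|+c_{2,1}|{\cal \tilde{E}}^{2,1}_{2}|<1$. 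Consequently, for the operator-$1$ strategy just fixed, the set $F$ in Lemma~\ref{lem:2} is non-empty and its cost-minimising element has total operator-$2$ cost strictly below $1$. The second sub-case of Lemma~\ref{lem:2} therefore applies, yielding operator $2$'s best response as the completion of ${\cal \tilde{E}}_1^1\cup{\cal \tilde{E}}^{1,2}_{1}$ to a $k$-connected network that minimises $c_2|{\cal E}_2^2|+c_{2,1}|{\cal E}^{2,1}_{2}|$, which is precisely the characterisation in item (2). The resulting network is $k$-connected, hence by Lemma~\ref{lem:1}(2) the adversary plays $\emptyset$, giving $U_A=0$.

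The only genuine check — and what I anticipate as the main (minor) subtlety — is verifying consistency between operator $1$'s and operator $2$'s minimisations in Case 2: namely, that after operator $1$ has minimised its cost among all networks in ${\cal \tilde{G}}$, operator $2$'s subsequent Lemma~\ref{lem:2} optimum is indeed a completion to an element of ${\cal \tilde{G}}$ rather than the null strategy. This is where the definition of ${\cal \tilde{G}}$ as the set of $k$-connected networks for which \emph{both} operators attain positive payoff does the work: operator $1$'s selection guarantees the feasibility precondition of Lemma~\ref{lem:2}'s non-null branch, and operator $2$'s subsequent cost minimisation then selects exactly the networks described in item (2). The remainder is bookkeeping.
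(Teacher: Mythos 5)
Your proof is correct and follows essentially the same route as the paper, which simply deduces Lemma~\ref{lem:4} by chaining Lemmas~\ref{lem:1}--\ref{lem:3} via backward induction without writing out the details. Your case split and the consistency check that operator $2$'s Lemma~\ref{lem:2} best response lands back in ${\cal \tilde{G}}$ are exactly the bookkeeping the paper leaves implicit.
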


In the following, we denote the SPE in the following format: $w := (({\cal E}_1^1,{\cal E}^{1,2}_1),({\cal E}_2^2,{\cal E}^{2,1}_2), {\cal E}_A)$ with ${\cal E}_1^1 \cup {\cal E}^{1,2}_1$ the strategy of operator 1, ${\cal E}_2^2 \cup {\cal E}^{2,1}_2$ the strategy of operator $2$, and ${\cal E}_A$ the strategy of the adversary (with ${\cal E}_A \subseteq \mathcal E_{1}^1 \cup {\cal E}^{1,2}_1 \cup {\cal E}_2^2 \cup {\cal E}_2^{2,1} $).

We can thus draw the following result.
\begin{lemma}\label{lem:5}
From Lemma~\ref{lem:4}, we obtain that
the only SPE is the null strategy for the three players, that is the SPE is $((\emptyset, \emptyset), (\emptyset, \emptyset), \emptyset)$ if any of the following condition is satisfied:
\begin{itemize}
\item[1)] $n_1 +n_2-1 < k+1$;
\item[2)] $(k+1)\min(c_{1,2}, c_{2,1}) + \floor*{\frac{(n_1-1)(k+1)}{2}} c_1 +  \floor*{\frac{(n_2-1)(k+1)}{2}} c_2\geq 2$ under $c_1 \leq c_{1,2}$ and $c_2 \leq c_{2,1}$.
\end{itemize}
In these cases, the SPE also corresponds to the optimal strategy for each of the $3$ players.
\end{lemma}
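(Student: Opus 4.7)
The plan is to invoke Lemma~\ref{lem:4}, which characterises exactly when the SPE is the null profile: this happens iff the set ${\cal \tilde G}$ of $k$-connected networks giving both operators strictly positive utility is empty. So for each hypothesis the task is to show ${\cal \tilde G} = \emptyset$, after which the null SPE and its utilities $(0,0,1)$ are immediate from Lemma~\ref{lem:4}(1).

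For Condition~(1) I would use a purely structural argument. A $k$-connected graph in the paper's sense has edge connectivity at least $k+1$, hence minimum vertex degree at least $k+1$, which requires the total number of vertices to be at least $k+2$. The hypothesis $n_1+n_2-1 < k+1$ is precisely $n < k+2$, so no $k$-connected graph on $\mathcal N$ exists at all: ${\cal G} = \emptyset$ and a fortiori ${\cal \tilde G} = \emptyset$.

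For Condition~(2) I would work with an arbitrary $k$-connected graph $({\cal N}, {\cal E}_1^1 \cup {\cal E}_1^{1,2} \cup {\cal E}_2^2 \cup {\cal E}_2^{2,1})$ and collect three necessary combinatorial inequalities: handshaking in $\mathcal N_1$ giving $2|{\cal E}_1^1| + |{\cal E}_1^{1,2}| + |{\cal E}_2^{2,1}| \geq n_1(k+1)$, its symmetric counterpart in $\mathcal N_2$, and the cut-size bound $|{\cal E}_1^{1,2}| + |{\cal E}_2^{2,1}| \geq k+1$ across the $\mathcal N_1$-vs-$\mathcal N_2$ cut. Using $c_1 \leq c_{1,2}$ and $c_2 \leq c_{2,1}$, I lower-bound the sum of the two operators' costs $c_1|{\cal E}_1^1| + c_{1,2}|{\cal E}_1^{1,2}| + c_2|{\cal E}_2^2| + c_{2,1}|{\cal E}_2^{2,1}|$ by first replacing each interlink unit price by $\min(c_{1,2},c_{2,1})$, then eliminating $|{\cal E}_1^1|$ and $|{\cal E}_2^2|$ through the handshaking inequalities, and finally minimising over the only remaining free quantity, the total number of interlinks (which is $\geq k+1$). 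Under the cost ordering the minimum is attained at exactly $k+1$ interlinks, producing the value $c_1 (n_1-1)(k+1)/2 + c_2(n_2-1)(k+1)/2 + (k+1)\min(c_{1,2},c_{2,1})$; replacing the halves by floors only relaxes the bound and recovers the left-hand side of the condition. If this value is at least $2$, every $k$-connected network has operator costs summing to at least $2$, so at least one operator pays $\geq 1$ and has non-positive utility, which forces ${\cal \tilde G}=\emptyset$. The ``also optimal for each of the three players'' clause is then immediate: neither operator can secure connectivity while keeping individual cost strictly below $1$, so $0$ is the best attainable payoff, and the adversary already receives his maximum of~$1$.

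The step I expect to be the main obstacle is that minimisation over the number of interlinks. Under the bare cost hypotheses $c_1 \leq c_{1,2}$ and $c_2 \leq c_{2,1}$ one cannot immediately conclude $\min(c_{1,2},c_{2,1}) \geq (c_1+c_2)/2$, so in principle an ``interlink-heavy'' configuration might bring the total cost below the canonical $k+1$-interlink value. Closing this gap cleanly will likely require not relying on the sum of the two budget inequalities alone, but rather on the individual per-operator constraints $c_1|{\cal E}_1^1| + c_{1,2}|{\cal E}_1^{1,2}| < 1$ and $c_2|{\cal E}_2^2| + c_{2,1}|{\cal E}_2^{2,1}| < 1$, so that an operator facing too many interlinks is immediately forced out of feasibility. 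Once this bookkeeping is correctly carried out, the conclusion of the lemma drops out.
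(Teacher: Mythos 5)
Your handling of Condition 1) coincides with the paper's: $n_1+n_2-1$ is the largest degree any node can have, so $n_1+n_2-1<k+1$ makes a minimum degree of $k+1$ (hence $k$-resistance) impossible, ${\cal G}=\emptyset$, and Lemma~\ref{lem:4} immediately gives the null SPE. No issue there.

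For Condition 2), the obstacle you flagged is not bookkeeping that will ``drop out'' --- it is fatal, both to your route and, as it happens, to the paper's own proof. The paper asserts that any $k$-resistant network contains at least $\floor*{\frac{(n_1-1)(k+1)}{2}}$ links in ${\cal E}_1^1\cup{\cal E}_1^{1,2}$ \emph{and} at least $\floor*{\frac{(n_2-1)(k+1)}{2}}$ links in ${\cal E}_2^2\cup{\cal E}_2^{2,1}$, i.e., it charges each operator separately for ``its'' share of the degree requirement at unit prices $c_1$ and $c_2$. That per-operator count does not follow from the three combinatorial inequalities you correctly listed (handshake in each layer plus the cut bound), because nothing prevents one operator from satisfying both layers' degree requirements entirely with cheap interlinks. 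Concretely, take $n_1=n_2=4$, $c_A$ with $k=\floor*{1/c_A}=1$, $c_1=c_{1,2}=\varepsilon$ small, and $c_2=c_{2,1}=0.9$. The left-hand side of Condition 2) equals $2\varepsilon+3\varepsilon+3\cdot 0.9=2.7+5\varepsilon\geq 2$, yet operator $1$ alone can build the $8$-cycle $1\!-\!5\!-\!2\!-\!6\!-\!3\!-\!7\!-\!4\!-\!8\!-\!1$ using only links of $E_{1,2}$ at cost $8\varepsilon<1$; this network is $1$-connected, operator $2$ pays nothing, both utilities are positive, so ${\cal \tilde{G}}\neq\emptyset$ and by Lemma~\ref{lem:4} the SPE is not the null profile. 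This also shows that your proposed repair via the individual budget constraints $c_1|{\cal E}_1^1|+c_{1,2}|{\cal E}_1^{1,2}|<1$ and $c_2|{\cal E}_2^2|+c_{2,1}|{\cal E}_2^{2,1}|<1$ cannot succeed, since both are comfortably satisfied in the example. The minimisation over the number of interlinks is attained at $k+1$ only under an additional hypothesis such as $\min(c_{1,2},c_{2,1})\geq(c_1+c_2)/2$, or some other restriction that prevents a single operator from affordably covering the whole network with interlinks; without something of that sort the stated inequality is not a valid lower bound on the total cost, and Condition 2) needs to be amended before any proof can go through.
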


\begin{proof}
We prove successively the two conditions:\\
(i)  $n_1 +n_2-1$ represents the maximal number of nodes that each node can connect to, i.e., the maximal degree. If this value is lower than $k+1$, then the adversary is able to disconnect any given network with at most $k$ link removals.\\
(ii) Note that for the network to be $k$-resistant, each node should have a degree of (at least) $k+1$. Since there are $n_1+n_2$ nodes, then there are at least $\ceil*{\frac{(n_1+n_2)(k+1)}{2}}$ links in the network. Further, ${\cal E}^{1,2}_1 \cup {\cal E}^{2,1}_2$ should contain at least $k+1$ links. Since $c_1 \leq c_{1,2}$ and $c_2 \leq c_{2,1}$, then these links are the ones with maximal cost. Similarly, there are at least $\floor*{\frac{(n_1-1)(k+1)}{2}}$ links in ${\cal E}_1^1 \cup {\cal E}^{1,2}_1$ and at least $\floor*{\frac{(n_2-1)(k+1)}{2}}$ links in ${\cal E}_2^2 \cup {\cal E}^{2,1}_2$.
\end{proof}

In the following, we thus focus our attention in situations in which the conditions of Lemma~\ref{lem:5} are not satisfied. Furthermore, we denote the set of SPE of the game by $\mathcal{L}^*$.

\subsection{Efficiency of the Equilibria}\label{efficiency_subsection}

In this section, we are interested in how different the costs are at the SPE and in a system where both operators can coordinate. In the scenario where both players can coordinate, the problem amounts to finding the optimal solution, and it uses at least $\ceil*{\frac{(n_1+n_2)(k+1)}{2}}$ links. Note that this bound can be reached using Harary networks \cite{harary}. However, since the costs $c_1$, $c_2$ and $c_{1,2}$ are different, the Harary networks using the least number of links may not correspond to the ones with the lowest cost. We present the upper and lower bounds of costs in the following proposition.

 \begin{proposition}[Upper and Lower Bounds]
 A lower bound on the total cost for creating a network is
 $$
 (k+1) \min(c_{1,2},c_{2,1}) + \ceil*{\frac{(n_1+n_2-2)(k+1)}{2}} \min (c_1, c_2).
 $$

 Suppose that $n_1 \geq k+1$ and $n_2 \geq k+1$. An upper bound on the total minimal cost for creating a network is
 $$
 (k+1)\min(c_{1,2}+c_{2,1}) + \ceil*{\frac{n_1(k+1)}{2}}c_{1} + \ceil*{\frac{n_2(k+1)}{2}}c_{2}.
 $$
 \end{proposition}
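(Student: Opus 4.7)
The plan is to establish the two inequalities separately: the lower bound by a counting argument on any feasible topology, and the upper bound by an explicit Harary-type construction.

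For the lower bound, I would first note that for either operator to obtain positive utility, Lemmas~\ref{lem:1}, \ref{lem:2}, and \ref{lem:3} together imply that the global network $({\cal N},{\cal E}_1\cup{\cal E}_2)$ must be $k$-resistant, i.e., have link-connectivity at least $k+1$. The classical degree lower bound (already invoked in the proof of Lemma~\ref{lem:5}) forces every node to have degree at least $k+1$, giving a total of at least $\ceil*{(n_1+n_2)(k+1)/2}$ links. Moreover, the set of interlinks forms an edge cut separating ${\cal N}_1$ from ${\cal N}_2$ (since no intralink has endpoints in both layers), so at least $k+1$ of the links must be interlinks; these contribute at least $(k+1)\min(c_{1,2},c_{2,1})$ to the total cost. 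The remaining links are intralinks, numbering at least $\ceil*{(n_1+n_2-2)(k+1)/2}$ by the identity $\ceil*{x}-a=\ceil*{x-a}$ for integer $a$, and each costs at least $\min(c_1,c_2)$. Summing the two contributions yields the claimed lower bound.

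For the upper bound, I would exhibit a specific $k$-resistant topology and show that its total cost matches the stated expression, which suffices since the minimum cost for a feasible network cannot exceed that of any particular feasible construction. On ${\cal N}_1$, use the Harary graph $H_{k+1,n_1}$ (well-defined since $n_1\geq k+1$): it is $(k+1)$-edge-connected and uses exactly $\ceil*{n_1(k+1)/2}$ intralinks, all attributable to operator~$1$ at unit cost $c_1$. Perform the symmetric construction on ${\cal N}_2$. Finally, add exactly $k+1$ interlinks in $E_{1,2}$, attributing all of them to whichever operator has the smaller unit interlink cost, contributing $(k+1)\min(c_{1,2},c_{2,1})$ (which I take to be the intended meaning of the right-most minimum in the statement). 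Adding the three contributions reproduces the claimed upper bound.

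The principal technical obstacle is verifying that this composite graph is indeed $(k+1)$-edge-connected. The plan is to consider an arbitrary nontrivial edge cut $(S,{\cal N}\setminus S)$ and show its size is at least $k+1$. If $S$ lies within a single layer, the cut already contains at least $k+1$ intralinks by the Harary property. If $S$ straddles both layers, a short case analysis on $S\cap{\cal N}_1$ and $S\cap{\cal N}_2$, combined with the $(k+1)$-edge-connectivity of each Harary subgraph and the presence of the $k+1$ interlinks, yields a cut of size at least $k+1$; choosing the interlinks so that their endpoints on each side are distinct (feasible thanks to $n_1,n_2\geq k+1$) streamlines this analysis and is the key reason why that hypothesis appears in the statement.
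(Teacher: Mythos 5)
Your proof follows essentially the same route as the paper's: the degree-sum bound plus the layer-separating cut for the lower bound, and two Harary graphs joined by $k+1$ interlinks for the upper bound. You are somewhat more explicit than the paper --- you spell out the ceiling arithmetic, the $(k+1)$-edge-connectivity check of the composite graph, and the reading of $\min(c_{1,2}+c_{2,1})$ as a typo for $\min(c_{1,2},c_{2,1})$ --- but the underlying argument is identical.
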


 \begin{proof} We prove successively the two parts.

 (Lower Bound): We know that since any node needs to have (at least) a degree of $k+1$, then at least $\ceil*{\frac{(n_1+n_2)(k+1)}{2}}$ links need to be created, among which there should be at least $k+1$ in ${\cal E}_{1,2} \cup {\cal E}_{2,1}$ so that the adversary cannot disconnect nodes of ${\cal N}_1$ from nodes of ${\cal N}_2$.
 
  (Upper bound): Since $n_1 \geq k+1$ and $n_2 \geq k+1$, then construct a $(n_1, k+1)$-Harary network among nodes of ${\cal N}_1$ and a $(n_2, k+1)$-Harary network among nodes of ${\cal N}_2$ which require $\ceil*{\frac{n_1(k+1)}{2}}$ and $\ceil*{\frac{n_2(k+1)}{2}}$ links, respectively. Finally, construct $k+1$ links between distinct nodes of  ${\cal N}_1$ and nodes of ${\cal N}_2$ for the global network being secure.
 \end{proof}

From Lemma~\ref{lem:4}, at the SPE, the two operators sequentially form an IoT-enabled infrastructure network that is $k$-connected (if such network can be constructed so that they both receive a positive utility). Recall that $w = (({\cal E}_1^1,{\cal E}^{1,2}_1),({\cal E}_2^2,{\cal E}^{2,1}_2), {\cal E}_A)$ with ${\cal E}_1^1 \cup {\cal E}^{1,2}_1$, ${\cal E}_2^2 \cup {\cal E}^{2,1}_2$, and ${\cal E}_A$ denoting the strategies of the first operator, second operator, and  the adversary, respectively. Then, the definition of price of anarchy (PoA) is as follows.

\begin{definition}[Price of Anarchy]
The PoA for the secure IoT-enabled  infrastructure network formation game is defined as
\begin{align}
PoA = \max_{w\in\mathcal{L}^*} \frac{C_{SPE}(w)}{C_{CO}},
\end{align}
where $C_{SPE}$ and $C_{CO}$ are the sum of costs for the operators at the SPE network and the sum of costs they would experience with coordination, respectively. 
\end{definition}

The following proposition shows that the individual costs  as well as the global sum of costs can be arbitrarily different in the SPE and coordinated optimal infrastructure networks.

\begin{proposition}
The PoA of the secure IoT-enabled infrastructure network formation game can be unbounded.
\end{proposition}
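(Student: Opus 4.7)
The strategy is to exhibit an explicit family of instances, parametrized by a small $\epsilon > 0$, whose price of anarchy diverges as $\epsilon \to 0^+$. Take $|\mathcal{N}_1| = |\mathcal{N}_2| = 2$ with $\mathcal{N}_1 = \{a,b\}$ and $\mathcal{N}_2 = \{c,d\}$, pick the attacker's unit cost $c_A$ so that $k = \lfloor 1/c_A \rfloor = 1$, and set $c_1 = c_2 = c_{1,2} = \epsilon$ together with $c_{2,1} = 3/10$. I would then show that $C_{CO} = \Theta(\epsilon)$ while some SPE $w\in\mathcal L^*$ has $C_{SPE}(w)$ bounded below by a positive constant independent of $\epsilon$.

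For the team-optimal benchmark, any $1$-connected (equivalently $2$-edge-connected) graph on four vertices requires at least four edges, achieved for instance by the cycle $a$-$b$-$c$-$d$-$a$. Since $\min(c_{1,2},c_{2,1}) = \epsilon$, every edge---intra-layer or inter-layer---can be charged $\epsilon$ in the coordinated problem, and hence $C_{CO} = 4\epsilon$.

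To locate the SPE I would apply the backward induction of Lemmas~\ref{lem:1}--\ref{lem:4}, focusing on a two-case analysis of operator~$1$'s first move. If operator~$1$ plays $\emptyset$, operator~$2$ must build a $2$-edge-connected graph using only $E_2 \cup E_{1,2}$; since $|E_2|=1$, every node in $\mathcal{N}_1$ obtains all of its degree from interlinks and thus needs at least $k+1 = 2$ interlinks, forcing a total of at least $n_1(k+1)=4$ interlinks at cost at least $4c_{2,1}=1.2>1$. By Lemma~\ref{lem:2}, operator~$2$'s best response is then $\emptyset$, the network ends up disconnected, and operator~$1$ earns utility $0$. In contrast, if operator~$1$ plays the single link $\{ab\}\in E_1$ at cost $\epsilon$, operator~$2$ can complete to the $4$-cycle $a$-$b$-$d$-$c$-$a$ by creating $\{cd,bd,ac\}$ at cost $c_2+2c_{2,1}=\epsilon+0.6<1$, yielding positive utility for both operators. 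Hence operator~$1$ strictly prefers a singleton play (utility $1-\epsilon > 0$), and by Lemma~\ref{lem:3} any SPE $w$ produced on this branch satisfies $C_{SPE}(w) \geq \epsilon + (\epsilon + 2c_{2,1}) = 2\epsilon + 0.6$. Combined with $C_{CO}=4\epsilon$, this gives
\begin{equation*}
\mathrm{PoA} \;\geq\; \frac{2\epsilon + 0.6}{4\epsilon} \;=\; \frac{1}{2} + \frac{3}{20\epsilon},
\end{equation*}
which diverges as $\epsilon\to 0^+$.

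The delicate part is the enumeration inside the two-case split: one must rigorously verify that every $2$-edge-connected completion by operator~$2$, starting either from $\emptyset$ or from a single $E_1$-edge, is forced to contain at least the claimed number of expensive $c_{2,1}$-interlinks. This follows from the degree constraints on the two nodes of $\mathcal{N}_1$ combined with the fact that $|E_2|=1$ so intra-layer substitution on side $2$ is essentially unavailable; I expect this degree-counting step to be the main technical obstacle in a fully written-out proof, but the parameters are chosen tightly so that the bounds $4c_{2,1} > 1 > \epsilon + 2c_{2,1}$ make the case analysis conclusive.
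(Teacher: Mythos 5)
Your argument is correct, and it follows the same overall strategy as the paper's proof: exhibit an explicit family of instances on which the coordinated cost vanishes while the SPE cost stays bounded away from zero, the gap being driven by operator $2$ being forced to buy expensive $c_{2,1}$ interlinks that coordination would let operator $1$ supply cheaply. The difference is the parametrization. The paper fixes $k=1$, $n_2=2$, lets $n_1\to\infty$, and ties the costs to $n_1$ (e.g.\ $c_{2,1}=1/(3n_1)$), obtaining $C_{CO}=3/n_1^2$ against $C_{SPE}=2/3$ and hence a PoA of $2n_1^2/9$; verifying that operator $2$'s all-interlink network is $1$-connected for general $n_1$ there requires a Menger-type disjoint-paths argument. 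You instead freeze the network at the minimal $2+2$ configuration and send the cheap costs to zero, which reduces all feasibility checks to a finite inspection of six candidate edges and two degree counts --- the enumeration you flag as the delicate step is in fact immediate here, since $a$ and $b$ each have exactly two available interlinks, so the bounds $4c_{2,1}>1>\epsilon+2c_{2,1}$ settle both branches of the case split. What the paper's version buys in exchange is the quantitative statement that the PoA grows quadratically in the network size for costs of the right order, rather than only diverging as a cost parameter degenerates. Both constructions establish the proposition.
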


\begin{proof}
We show the result by considering a situation with $n_1 >2$, $n_2 = 2$, $k=1$, $c_1=c_2=\frac{1}{n_1^3}$, $c_{1,2}=\frac{1}{n_1^2}$ and $c_{2,1}=\frac{1}{3n_1}$.

An optimal joint strategy is to create all links of the form $(i,i+1)$ with $1 \leq i \leq n_1+2$ and link $(n_1+2,1)$. As this construction forms a cycle of the $n_1+n_2$ nodes, then it is $1$-connected. Further, it contains exactly $\ceil*{\frac{(n_1+n_2)(k+1)}{2}}=n_1+n_2$ links, and among those $k+1=2$ are in $E_{1,2}$. It is therefore an optimal solution. Its cost is $C_{CO} = (n_1+n_2-2)c_1+2c_{1,2} = n_1 c_1+2c_{1,2} = \frac{3}{n_1^2}$.

Next, we investigate the SPE of the game. The operator $1$ plays the null strategy only if operator $2$ can construct a $1$-connected network at a cost lower than $1$. We then consider the following strategy for operator $2$ that consists in creating all links of the form $(i, n_1+1)$ and $(i, n_1+2)$ for all $1 \leq i \leq n_1$. This strategy has a cost of $C_{SPE} = 2 n_1 c_{2,1} = \frac{2}{3} < 1$, and the resulting network is $1$-connected which can be shown by using Menger's theorem \cite{Menger}. Indeed, for any nodes $i$ and $j$, we can construct at least two disjoint paths. For instance, if $i$ and $j$ are both in ${\cal N}_1$, we consider the paths (both of length $2$) $i;(i, n_1+1);n_1+1 ; (n_1+1, j) ; j$ and $i ; (i, n_1+2) ; n_1+2 ; (n_1+2, j) ; j$. If $i \in {\cal N}_1$ and $j \in {\cal N}_2$, we consider the paths of length $1$: $i ; (i, j) ; j$ and the paths of length $3$: $i; (i, k) ; k; (k, \ell) ; \ell ; (\ell, j); j$ (with $k \in {\cal N}_1$, $k \neq i$ and $\ell \in {\cal N}_2$, $\ell \neq j$). Finally, if both $i$ and $j$ are in ${\cal N}_2$, then we consider the paths of length $2$: $i;(i, 1);1 ; (1, j) ; j$ and $i;(i, 2);2 ; (2, j) ; j$. Note that this strategy is optimal for the second operator, since it creates $2 n_1$ links in $E_{1,2}$.

In summary, the degradation of performance in terms of PoA in this scenario is $C_{SPE} / C_{CO} = \frac{2n_1^2}{9}$ which increases quadratically in $n_1$. 
\end{proof}

We next characterize a class of scenarios in which $PoA=1$, leading to an efficient decentralized design of secure infrastructure network.
\begin{lemma}
When two subnetworks contain the same number of nodes,  and the unitary costs satisfy the following condition $c_1=c_2 = c_{1,2}=c_{2,1}<\frac{1}{n_1(k+1)}$, and $k< n_1$, then the PoA is 1.
\end{lemma}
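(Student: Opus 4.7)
The plan is to show that both the SPE total cost and the coordinated minimum cost equal $c\,n_1(k+1)$, from which $PoA=1$ follows immediately. For the coordinated benchmark, the minimum-degree bound forces at least $\ceil*{(n_1+n_2)(k+1)/2}=n_1(k+1)$ links in any $k$-connected network on $2n_1$ vertices, and this bound is tight via the Harary graph $H_{k+1,2n_1}$, whose existence is guaranteed by $2n_1\ge k+2$ (implied by $k<n_1$). Since every link carries the same unitary cost $c$, this yields $C_{CO}=c\,n_1(k+1)$, and the hypothesis $c<1/(n_1(k+1))$ ensures $C_{CO}<1$, so non-trivial secure solutions with positive utilities exist.

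For the SPE I invoke Lemma~\ref{lem:4} via backward induction. Under the uniform cost $c$, operator~1's payoff is $1-c|\mathcal{E}_1|$, strictly decreasing in the number of links it creates, so operator~1 strictly prefers the null strategy $\mathcal{E}_1=\emptyset$ whenever operator~2 can still complete a $k$-connected network at a cost below $1$. To establish this feasibility, I would exhibit a $(k+1)$-regular bipartite graph between $\mathcal{N}_1$ and $\mathcal{N}_2$---for example the circulant bipartite graph linking left node $i$ to right nodes $\{i,i+1,\ldots,i+k\}\bmod n_1$---which uses exactly $n_1(k+1)$ links in $E_{1,2}$ and has edge-connectivity $k+1$. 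Once operator~1 plays $\emptyset$, each node of $\mathcal{N}_1$ must draw all its incident links from $E_{1,2}$ because operator~2 has no access to $E_1$, so the degree bound forces $|\mathcal{E}_2^{2,1}|\ge n_1(k+1)$, and any additional link of $E_2$ would be strictly cost-increasing. Hence operator~2's minimum-cost best response uses exactly $n_1(k+1)$ links in $E_{1,2}$, costs $c\,n_1(k+1)<1$, and yields $C_{SPE}=0+c\,n_1(k+1)=C_{CO}$.

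The main technical obstacle I anticipate is verifying that the proposed circulant bipartite graph is indeed $(k+1)$-edge-connected: uniform degree alone is insufficient, since two disjoint copies of $K_{k+1,k+1}$ would form a $(k+1)$-regular counterexample that is not even connected. I would resolve this with a Menger-type argument, constructing $k+1$ edge-disjoint paths between any two vertices from the cyclic-shift structure, with the hypothesis $k<n_1$ providing enough index freedom for such paths to exist. Combining the three pieces then gives $PoA = C_{SPE}/C_{CO} = 1$.
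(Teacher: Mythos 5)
Your proposal is correct and follows essentially the same route as the paper: both compute $C_{CO}=n_1(k+1)c$ via the degree lower bound attained by a Harary network, and both obtain $C_{SPE}=n_1(k+1)c$ by having operator 1 play the null strategy while operator 2 alone builds a $(k+1)$-regular circulant bipartite graph in $E_{1,2}$ (the same construction the paper uses in its Proposition on the order of play). Your version is in fact more careful than the paper's terse proof, since you explicitly flag and address the need to verify $(k+1)$-edge-connectivity of the bipartite construction via Menger's theorem.
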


\begin{proof}
In the scenarios that $c_{2,1}<\frac{1}{2n_1}$, we know that at the SPE, only operator 2 creates interlinks which are solely in $E_{1,2}$. Thus, the cost at SPE is $C_{SPE} = n_1(k+1)c_{2,1}$. For the cooperative case, the network configuration is a Harary network due to the same link costs. Therefore, the total cost is $C_{CO} = \ceil*{\frac{(n_1+n_2)(k+1)}{2}}c_1$ in the optimal network. Due to $n_1=n_2$ and $c_{2,1} = c_1$, we obtain $C_{CO} = n_1(k+1)c_{2,1}c_{2,1} = C_{SPE}$, and thus $PoA = 1$ in this scenario.
\end{proof}

\subsection{Some Counter-Intuitive Results}\label{counter_results}
In this section, we present some counter-intuitive results of the IoT-enabled infrastructure network formation game.

The following Proposition \ref{prop:unicity} shows that for given system parameters, the SPE may not be unique. This, in terms of infrastructure network architecture is not surprising, since several topologies can lead to a $k$-connected network with the minimal cost. More surprisingly, however is the fact that the SPE may not be unique in terms of the costs.

\begin{proposition}[Non-Unicity of Equilibrium Cost]\label{prop:unicity}
For given values of the parameters $n_1$, $n_2$, $c_1$, $c_2$, $c_{1,2}$, $c_{2,1}$ and $c_A$, the SPE may not be unique. More precisely, at the SPE, there is a unique payoff value associated to the operator $1$, but there may be several payoff values of the operator $2$.
\end{proposition}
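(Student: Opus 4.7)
The proof splits naturally into the two sub-claims. For the uniqueness of operator~1's SPE payoff, the statement follows directly from Lemma~\ref{lem:4}: operator~1 selects any minimizer of $c_1|\mathcal{E}_1^1|+c_{1,2}|\mathcal{E}_1^{1,2}|$ over $\tilde{\mathcal{G}}$ (or plays the empty strategy if $\tilde{\mathcal{G}}=\emptyset$). Since the minimum value of a real-valued function is unique regardless of how many arguments attain it, operator~1's payoff is identical at every SPE.

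The non-uniqueness of operator~2's payoff requires an explicit example. The plan is to choose parameters producing two non-isomorphic operator~1 strategies $s, s'$ with the same operator~1 cost but different minimum completion costs for operator~2. A clean witness is $n_1=4$, $n_2=3$, $k=1$ (i.e., $1/2\le c_A<1$), with $c_1=c_{1,2}=0.1$, $c_2=0.25$, and $c_{2,1}=0.17$; label $\mathcal{N}_1=\{a_1,\ldots,a_4\}$ and $\mathcal{N}_2=\{b_1,b_2,b_3\}$. I would show operator~1's optimal cost is $2c_1=0.2$, achieved among others by $s=\{(a_1,a_2),(a_3,a_4)\}$ (two $\mathcal{N}_1$-intralinks) and $s'=\{(a_1,a_2),(a_3,b_1)\}$ (one intralink plus one interlink). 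Writing $A$ and $B$ for the residual $\mathcal{N}_1$- and $\mathcal{N}_2$-side degree deficits that operator~2 inherits, operator~2 faces the constrained minimization $\min(c_{2,1}x+c_2 y)$ subject to $x\ge A$ and $x+2y\ge B$, where $x,y$ count operator~2's interlinks and $\mathcal{N}_2$-intralinks, respectively. For $s$ one has $(A,B)=(4,6)$, with minimum $4c_{2,1}+c_2=0.93$; for $s'$ one has $(A,B)=(5,5)$, with minimum $5c_{2,1}=0.85$. Explicit $2$-edge-connected $7$-edge realizations in each case certify feasibility.

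The main obstacle is ruling out all other operator~1 strategies. For strategies with $\le 1$ link, a degree-counting argument --- using that $\mathcal{N}_1$-intralinks can be built only by operator~1, so any $a_i$ left with degree $<2$ by operator~1 forces operator~2 to compensate via interlinks --- gives $A\ge 6$, hence operator~2's cost $\ge 6c_{2,1}=1.02>1$; operator~2 therefore refuses to act, and the resulting operator~1 payoff is at most $0<0.8$. For strategies with $\ge 3$ links, operator~1's own cost already exceeds $3c_1=0.3$, giving payoff at most $0.7<0.8$. A secondary concern is that the integer-program lower bound on operator~2's cost might be slack against the full $2$-edge-connectivity requirement, which in general imposes cut constraints beyond min-degree~$2$; this is dispatched by the explicit constructions, which attain the LP bound and are visibly $2$-edge-connected by inspection.
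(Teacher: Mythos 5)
Your proof is correct and follows essentially the same route as the paper's: the paper also proves non-unicity via an explicit $n_1=4$, $n_2=3$ counterexample (with $k=2$ and all unit costs equal to $0.09$) in which operator $1$ is indifferent between a single intralink and a single interlink, yet operator $2$'s completion cost differs ($10$ vs.\ $11$ links). Your version with $k=1$ and asymmetric costs is a valid alternative witness of the same phenomenon, and is somewhat more explicit than the paper in verifying operator $1$'s optimal cost and in justifying the uniqueness of operator $1$'s payoff via Lemma~\ref{lem:4}.
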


\begin{proof}
We show the result by providing an example with the property. Let parameters be $n_1 = 4$, $n_2 = 3$, $k=2$, and $c_1 = c_2 = c_{1,2} = c_{2,1} = 0.09$.
From the values of $c_2$ and $c_{2,1}$, operator $1$ knows that at the SPE, operator $2$ builds at most $11$ links. Since each node of ${\cal N}_1$ needs to have a degree of $3$, if operator $1$ builds no link, then operator $2$ needs to build at least $12$ links which is more than that it can bear. Thus operator $1$ needs to build at least $1$ link. Then, depending on the choice of the link created by operator $1$, operator $2$ needs to build either $10$ or $11$ links, as illustrated in Fig.~\ref{fig:nonUnic}.
\tikzset{protege/.style = {line width=4pt}}
\tikzset{attaque/.style = {style=loosely dashed}}
\tikzset{both/.style = {line width=3pt, style=loosely dashed}}
\tikzset{multi/.style = {
shape = rectangle,
line width=1pt,
inner sep = 1.5pt,
outer sep = 0pt,
minimum size = 15pt,
draw
}}
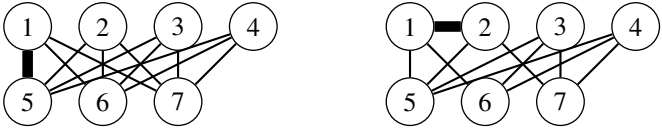
\begin{figure}
\subfloat[${\cal E}_1^1 = \emptyset$ and ${\cal E}_1^{1,2}=\{(1,5)\}$. Then ${|\cal E}_2^{2,1}|=11$.]{
\begin{tikzpicture}
\GraphInit[vstyle=Normal]
\Vertex{1} \EA(1){2} \EA(2){3} \EA(3){4}
\SO(1){5} \EA(5){6} \EA(6){7}
\Edges(5,2,6,3,7,4,6,1,7,2)
\Edges(4,5,3)
\tikzset{EdgeStyle/.style = protege}
\Edge(1)(5)
\end{tikzpicture}} \hfill
\subfloat[${\cal E}_1^1 = \{(1,2)\}$ and ${\cal E}_1^{1,2}=\emptyset$. Then ${|\cal E}_2^{2,1}|=10$.]{
\begin{tikzpicture}
\GraphInit[vstyle=Normal]
\Vertex{1} \EA(1){2} \EA(2){3} \EA(3){4}
\SO(1){5} \EA(5){6} \EA(6){7}
\Edges(6,1,5,2,7,4,6,3,5,4)
\Edge(3)(7)
\tikzset{EdgeStyle/.style = protege}
\Edge(1)(2)
\end{tikzpicture}}
\caption{$\mathcal{N}_1=\{1,2,3,4\}$ and $\mathcal{N}_2=\{5,6,7\}$. Non-unicity of the cost of operator $2$. Operator 2 either creates 10 or 11 links depending on operator 1's strategy at equilibrium. (Bold line: operator $1$; Plain thin line: operator $2$).}\label{fig:nonUnic}
\end{figure}
\end{proof}

In addition, the order of the operators creating their communication links has an impact on the payoffs of the players.
\begin{proposition}\label{prop:order}
Consider an IoT-enabled infrastructure network where the roles of the operators are symmetric, i.e.,
$ n_1 = n_2, \; c_{1,2}=c_{2,1}, \; c_1=c_2.
$
Suppose that $n_1 (k+1) c_{1,2} < 1$. Then at the SPE, the payoff of operator $1$ is $1$ while that of operator $2$ is $1-n_1(k+1) c_{1,2}$.
\end{proposition}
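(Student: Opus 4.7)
The plan is to proceed by backward induction and exploit the fact that in this symmetric setting operator~$1$ can secure the maximum possible payoff of~$1$ by committing to the null strategy, thereby forcing operator~$2$ to bear the entire cost of building the network. Since $U_1 \leq 1$ with equality \emph{only if} operator~$1$ creates zero links \emph{and} the final network is connected, any SPE yielding $U_1 = 1$ must have $\mathcal{E}_1^{\text{MAX}} = \emptyset$. It therefore suffices to show that, given $\mathcal{E}_1 = \emptyset$, operator~$2$'s best response still produces a $k$-connected network at cost strictly less than~$1$.

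First I would derive a sharp lower bound on operator~$2$'s cost in the subgame following $\mathcal{E}_1 = \emptyset$. Because every node of $\mathcal{N}_1$ starts with degree~$0$ and operator~$2$ cannot create intralinks inside $\mathcal{N}_1$, each $\mathcal{N}_1$-node must receive at least $k+1$ interlinks in order for the final network to be $k$-connected in the sense of the paper (i.e., to survive any $k$ link deletions). Summing degrees on the $\mathcal{N}_1$ side yields $|\mathcal{E}_2^{2,1}| \geq n_1(k+1)$; since intralinks in $\mathcal{E}_2^2$ do not relax this requirement and are not free, the minimum-cost construction uses no intralinks and incurs cost at least $n_1(k+1)\,c_{2,1} = n_1(k+1)\,c_{1,2}$.

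Second, I would match this lower bound by exhibiting a $(k+1)$-edge-connected bipartite graph between $\mathcal{N}_1$ and $\mathcal{N}_2$ with exactly $n_1(k+1)$ edges, which is possible because $n_1 = n_2 \geq k+1$ is implicit in $\tilde{\mathcal{G}} \neq \emptyset$ under the null strategy. A convenient candidate is the circulant bipartite graph in which $u_i \in \mathcal{N}_1$ is joined to $v_{(i+j)\bmod n_1} \in \mathcal{N}_2$ for $j = 0, 1, \dots, k$. The required connectivity can be verified through Menger's theorem by producing $k+1$ edge-disjoint paths between any two vertices, in the same spirit as the PoA example earlier in the paper. Combining both bounds shows that operator~$2$'s minimum cost is exactly $n_1(k+1)\,c_{1,2}$, which is strictly less than $1$ by hypothesis. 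Consequently operator~$2$ does build the network, and Lemma~\ref{lem:2} yields payoff $1 - n_1(k+1)\,c_{1,2}$ for operator~$2$; plugging this back into operator~$1$'s best-response problem confirms that $\mathcal{E}_1 = \emptyset$ achieves $U_1 = 1$, hence is optimal.

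The main obstacle is the combinatorial verification that a $(k+1)$-edge-connected, $(k+1)$-regular bipartite graph on two copies of $n_1$ vertices indeed exists and meets the minimum-edge bound; this is what ensures the lower bound on operator~$2$'s cost is tight. Once this is established, everything else is a direct application of the best-response characterizations in Lemmas~\ref{lem:1}--\ref{lem:4}, and the impossibility of substituting cheaper intralinks for interlinks when operator~$1$ abstains follows from the fact that intralinks within $\mathcal{E}_2^2$ cannot raise the degree of any $\mathcal{N}_1$-node.
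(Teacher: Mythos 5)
Your proposal is correct and follows essentially the same route as the paper's proof: operator $1$ plays the null strategy, and operator $2$ alone builds a $(k+1)$-regular circulant bipartite network in $E_{1,2}$ with exactly $n_1(k+1)$ interlinks (the paper's construction uses a centered modular window, which is isomorphic to your $j=0,\dots,k$ shifts). Your version is in fact slightly more complete, since you make explicit the matching lower bound --- that every $\mathcal{N}_1$-node must receive $k+1$ interlinks from operator $2$, so the cost cannot be below $n_1(k+1)c_{1,2}$ --- which the paper leaves implicit.
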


\begin{proof}
We consider the network $(\mathcal N, {\cal E})$ with ${\cal E} \subseteq E_{1,2}$. ${\cal E}$ is the set of links $(i,j)$ with $1 \leq i \leq n_1$ and $n_1 +(i -\floor*{\frac{k}{2}}) \mod n_2 \leq j \leq n_1 + (i+\ceil*{\frac{k}{2}}) \mod n_2$. Then, the network is $k$-connected and has exactly $\frac{(n_1+n_2)(k+1)}{2}$ links.
The operator 2 can construct a $k$-connected SPE network which has exactly $\frac{(n_1+n_2)(k+1)}{2}=n_1(k+1)$ links all in $E_{1,2}$.
\end{proof}

Next, we investigate the impact of order of play between two operators in constructing the network on their corresponding payoffs. Specifically, we propose a notion, \textit{Price of Seniority (PoS)}, to quantify the potential loss of operator 2 for playing second comparing with the scenario in which he plays first. The formal definition of PoS is presented as follows.

\begin{definition}[Price of Seniority]
Denote $C_2$ by the total cost of operator 2 at SPE, and $C_2^{(1)}$ by its total cost if it was playing first. Then the price of seniority (PoS) that quantifies the loss of operator 2 for playing second can be defined as
\begin{align}
PoS = \max_{w\in\mathcal{L}^*,\tilde{w}\in\tilde{\mathcal{L}}^*}\ \frac{C_2(w)}{C_2^{(1)}(\tilde{w})}.
\end{align}
Recall that $w$ is a strategy profile including all three players' actions, and $\mathcal{L}^*$ is a set containing SPE of the game. Similar for $\tilde{w}$ and 
$\tilde{\mathcal{L}}^*$ where the playing sequence of operators 1 and 2 is interchanged.

Furthermore, in the case of symmetric scenarios, i.e., $n_1=n_2$, $c_1=c_2$, and $c_{1,2}=c_{2,1}$, and denote $C_1$ by the total cost of operator 1, then PoS can be written as
\begin{align*}
PoS = \max_{w\in\mathcal{L}^*}\ \frac{C_2(w)}{C_1(w)} =  \max_{w\in\mathcal{L}^*}\ \frac{c_{2} |{\cal E}_2 \cap E_2| + c_{2,1} |{\cal E}_2 \cap E_{1,2} |}{c_{1} |{\cal E}_1 \cap E_1| + c_{1,2} |{\cal E}_1 \cap E_{1,2} |},
\end{align*}
where $C_2^{(1)}$ is replaced by $C_1$ due to symmetry.
\end{definition}

\begin{corrollary}\label{coro1}
The PoS can be arbitrary large, and it is lower bounded by 1, i.e., $PoS\geq 1$.
\end{corrollary}

\begin{proof}
Based on Proposition \ref{prop:order}, we can verify that in the symmetric scenarios with $n_1 (k+1) c_{1,2} < 1$, $C_1 = 0$ and $C_2 = n_1(k+1)c_{1,2}$, and hence $PoS$ can be infinite. For generally cases, we assume that $w$ is an SPE of the original game. Then, $w$ is also a feasible strategy profile for all three players if operator 2 plays first. Thus, we know that the utility of operator $2$ for acting first $C_2^{(1)}$ can at least achieve at a value $C_2$ resulting from the SPE $w$ of original game. Therefore, we conclude that $C_2/C_2^{(1)}\geq 1$, leading to $PoS\geq 1$.
\end{proof}

Corollary \ref{coro1} states that playing first is beneficial for the operator, and in certain cases the operator's optimal strategy becomes null if moving first in the game. Knowing this fact, the operator who has power to determine the order of players will prefer to move first by acting as a leader in this dynamic game to gain more benefits.

\subsection{More Threats Induce Collaboration}\label{collaboration_results}

In the adversarial network design, cyber attacks induce more collaboration (fair share of the costs) between two network operators. Specifically, when $k$ is small, operator $2$ may be the only player in creating the whole connected network. As $k$ increases (and as the number of required links increases), the cost for creating a connect infrastructure network increases. Then, at some point of $k$ when $(n_1+n_2) (k+1) > 1$, only network operator 2 cannot create a secure infrastructure network. Then, a set of new communication links is required from operator $1$ for the network to be $k$-connected. This phenomenon can be summarized as: a higher level of threats leads to collaboration between operators. Note that in some scenarios, the cost of operator 1 can be larger than that of operator 2 in network formation game. We illustrate these results using the following example.



\begin{example}
Let $n_1 = n_2 = 3$, $c_1 = c_{1,2}=0.19$ and $c_2 = c_{2,1}=0.21$.
Then, for $k=1$, the cost of operator 1 is $C_1 = 2c_1 = 0.38$, and the cost of operator 2 is $C_2 = 4c_2 = 0.84$. When $k=2$, the costs of operators 1 and 2 become $C_1 = 5c_1 = 0.95$ and $C_2 = 4 c_2 = 0.84$, respectively, where operator 1 costs more in creating a secure network than operator 2. In addition, as $k$ increases, operator 1 collaborates more with operator 2 to construct a $k$-resistant two-layer network. 
\end{example}

\begin{spacing}{1.7}    
 \begin{algorithm*}[!h]
 \caption{Algorithm to construct a secure interdependent IoT-enabled infrastructure network}\label{algo}
 	\DontPrintSemicolon
\KwIn{Parameters $c_1$, $c_2$, $c_{1,2}$, $c_{2,1}$, and odd $n_1$, $k$}
\KwOut{Network created by operator $1$ and $2$ at the SPE}
Let $e_1=0$ \tcp*{Number of links for operator $1$ to build}
Let $nb_{11} = (n_1-1) (k+1)/2$  \tcp*{The maximal number of links in $E_1$}
Compute $C_2(e_1) = n_1 (k+1) c_{2,1}$  \tcp*{Cost of operator $2$}
\While{$C_2(e_1) \geq 1$ \tcp*{Increase $e_1$ until $C_2$ becomes lower than $1$}}
{
 $e_1 = e_1+1$\;
\If{$e_1 \leq nb_{11}$}{
 $C_2(e_1) = C_2(e_1) - 2 c_{2,1} + c_2$} \Else
{$C_2(e_1) = C_2(e_1) - c_{2,1}$}
}

Let $e_{12} = \max(0, e_1-nb_{11})$;\ $e_{11} = e_1-e_{12}$
\tcp*{Number of links in $E_{1,2}$ and $E_{1}$ created by operator $1$}

Let $C_1(e_1) = e_{12} c_{1,2} + e_{11} c_1$\tcp*{Compute the resulting cost of operator $1$}
\If{$C_1(e_1) \geq 1$ \tcp*{The SPE is $((\emptyset, \emptyset), (\emptyset, \emptyset), \emptyset)$.}}
{exit(0)
}
Set $m = \floor*{\frac{e_{11}}{n_1-1}}$ \tcp*{Number of Hamiltonian cycles in $E_1$}

\For {$i=1$ to $m(n_1-1)$ \tcp*{Create $(n_1, m)$-Harary network without links $(1,i)$ with $2 \leq i \leq m$}}
{
  Let $j, \ell \geq 0$, such that $i-1 = \ell (n_1-1)+j$ and $j < n_1-1$\;
  $j = j+2$; $\ell = \ell+1$; \;
  Create links $(j; (j+\ell) \mod n_1)$ and $(j+n_1; ((j+\ell) \mod n_1) + n_1)$
}

\For {$i=1$ to $m$ \tcp*{Create links in $E_{1,2}$ such that $m$ Hamiltonian cycles are formed}}
{
  Create links $(1; n_1+1+i)$ and $(n_1+1; 1+i)$
}

Let $z = e_{11}-m(n_1-1)$ \tcp*{Number of extra links in $E_1$}
\If {$z>0$}
{
  $j=1$\;
  \For {$\ell =1$ to $z$}
  {
    $p = (j + (n_1+1)/2) \mod n_1$; \;
    Create link $(j;p)$ and $(j+n_1; p+n_1)$; \;
    Set $j=p$
  }
  $j=1$\;
  \For {$\ell =z+1$ to $n_1-1$}
  {
    $p = (j + (n_1+1)/2) \mod n_1$; \;
    Create link $(j;p+n_1)$ and $(j+n_1; p)$; \;
    $j=p$
  }
  Create link $(1; n_1+1)$ and $((n_1+1)/2; n_1 + (n_1+1)/2)$
}

Let $m_{12} = (k+1)/2-m$ \tcp*{Remaining Hamilton cycles in $E_{1,2}$} \quad \lIf {$z>0$}{$m_{12} = m_{12}-1$}

\For {$\ell = 1$ to $m_{12}$}
{
\For {$p=1$ to $n_1$}
{
Create link $(p; (p+m+\ell+1) \mod n_1 +n_1)$ and $(p+m + \ell+1) \mod n_1 + n_1; (p+1) \mod n_1)$\;
}}

\end{algorithm*}

\end{spacing}

\section{Design and Analysis of Algorithm for Secure Interdependent  Network Construction}\label{construct_alg}

With the obtained SPE in Section \ref{backward_induction}, the next critical step is to construct the secure interdependent IoT-enabled infrastructure networks. In this section, we present an algorithm to achieve this goal. For clarity purposes, we consider the scenario that $n_1 = n_2$. Further, we suppose that $n_1$ and $k$ are odd numbers. Knowing that in general mission-critical infrastructure networks, creating a link between two networks can be more difficult than creating a link within a network and thus induces a higher cost, we then have the conditions  $c_1 \leq c_{1,2}$ and $c_2 \leq c_{2,1}$ during the algorithm design.

\subsection{Network Construction Algorithm Design}
To construct a $k$-resistant IoT-enabled infrastructure network, operators $1$ and $2$ need to jointly create a network that has at least $\ceil*{\frac{n(k+1)}{2}} = n_1(k+1)$ links. This constitutes a lower bound on the number of links created (in a non-null strategy). Since $k$ is odd, by using Menger's theorem \cite{Menger}, we propose a construction using a superposition of exactly $\frac{k+1}{2}$ independent Hamiltonian cycles. The algorithm for the network construction is depicted in Algorithm \ref{algo}, and for clarity, we decompose it into 4 stages as follows.

\paragraph*{\textbf{Stage} $\mathbf{1}$} (Algorithm, line 1--15) In this stage, we determine the optimal values of $|{\cal E}_1^1|$ and $|{\cal E}_1^{1,2}|$. 

Denote $e_1 = |{\cal E}_1^1 \cup {\cal E}_1^{1,2}|$. 
For a given $e_1$, note that the cost of operator $1$ is minimized when $|{\cal E}_{1,2}|$ is minimized (since $c_1 \leq c_{1,2}$). 
Further, for each node of ${\cal N}_1$ whose degree is $d$ (with $d \leq k+1$), operator $2$ needs to create at least $k+1-d$ links in $E_{1,2}$. Note that any link of $E_1$ increases the degree of $2$ nodes in ${\cal N}_1$ by $1$, while any link of $E_{1,2}$ increases the degree of only $1$ node in ${\cal N}_1$ by $1$. Thus, 
each link created by operator $1$ in $E_1$ allows to decrease $|{\cal E}_2^{2,1}|$ by $2$, while each link created by operator $1$ in $E_{1,2}$ allows to decrease $|{\cal E}_2^{2,1}|$ by only $1$. Furthermore, for a given value of $e_1$, the cost of the second operator is minimized when $|{\cal E}_2^{2,1}|$ is minimized (since $c_2 \leq c_{2,1}$, as long as the sum of degrees of nodes in ${\cal N}_1$ is less than $n_1(k+1)-(k+1)$ since $k+1$ links are required in $E_{1,2}$).

Thus, for a given value of $e_1$,
both operators' costs are minimized when  $|{\cal E}_1^{1,2}|$ is minimized, that is when operator $1$ uses as many links between nodes of ${\cal N}_1$ as possible, as long as the sum of degrees of the nodes in ${\cal N}_1$ is less than $(n_1-1)(k+1)$.
Thus, for a given $e_1$,
$$e_{11} = |{\cal E}_{1}^1| = 
\left\{ 
\begin{array}{ll}
e_1 & \text{if } 2e_1 \leq (n_1-1)(k+1),\\
(n_1-1)(k+1)/2 & \text{otherwise.}
\end{array}\right.
$$

Thus, operator $1$ chooses the minimal value of $e_1$ and a set of links such that operator $2$ can
construct a $k$-resistant network with a cost lower than $1$.
Then, operator $1$ computes its own resulting cost. If it is higher than $1$, then no links are created and the SPE is $((\emptyset, \emptyset), (\emptyset, \emptyset), \emptyset)$. Otherwise, a network with $e_1$ links for operator $1$ and $n_1(k+1)-e_1$ links for operator $2$ is created. 

\paragraph*{\textbf{Stage} $\mathbf{2}$} (Algorithm, line 16--24) In this stage, we form $m$ independent Hamiltonian cycles with $m = \floor*{\frac{e_{11}}{n_1-1}}$.

First, operator $1$ creates 
links in $E_1$ in a similar manner as in Harary~\cite{harary}. That is, it first creates links between nodes $i$ and $j$ such that $(|i-j| \mod n_1) = 1$, and then $(|i-j| \mod n_1) = 2$, etc. From~\cite{jude_2016}, we know that a $2m$-Harary network contains exactly $m$ independent Hamilton cycles of ${\cal N}_1$, that is cycles that go through all $n_1$ nodes and such that no link is used more than once. Further, \cite{jude_2016} shows that there exists a construction such that links $(1;2)$, $(1;3)$, ..., $(1; m-1)$ all belong to different cycles. Thus, we remove those links from our construction and build all other links of the Harary network. 
We further construct $m(n_1-1)$ links in $E_2$ which are symmetric to as those in $E_1$. Note that these links in $E_2$ are created by operator 2. Hence, this stage creates $2m(n_1-1)$ links.

Further, for $1 \leq i \leq m$, 
by constructing two links, one between nodes $1$ and $n_1+i+1$ and one between nodes $n_1+1$ and $i+1$, we form a Hamiltonian cycle between all nodes in ${\cal N}_1 \cup {\cal N}_2$. Note that all $m$ different cycles use independent links. This further creates $2m$ links in ${E}_{1,2}$.

\paragraph*{\textbf{Stage} $\mathbf{3}$} (Algorithm, line 25--37) In the case where $e_{11} > m(n_1-1)$, then operator $1$ still needs to create $z = e_{11}-m(n_1-1)$ links in $E_1$. 

In that case, we create an additional Hamiltonian cycle in the following manner. Starting from node $1$, we consider the sequence $i_1; i_2; ... ; i_{n_1}$ with $i_1 = 1$ and $i_{j+1} = (i_{j}+ (n_1+1)/2) \mod n_1$. Since $n_1$ is odd, then for all $1 \leq j, \ell \leq  n_1$ and $j\neq \ell$, we have $i_j \neq i_\ell$ or in other words the sequence $i_1; i_2; ... ; i_{n_1}$ defines a permutation of indices $1, ..., n_1$. We then consider the following construction: for $j \leq z$, we construct the links $(i_j; i_{j+1})$ and $(i_{j}+n_1; i_{j+1}+n_1)$ and for $z < j < n_1$, we construct the links $(i_j; i_{j+1}+n_1)$ and $(i_{j}+n_1; i_{j+1})$. This defines $2$ sequences, and each one contains exactly $n_1$ nodes. By adding links $(1; n_1+1)$ and $((n_1+1)/2; n_1 + (n_1+1)/2)$, we create a full Hamiltonian cycle. Note that none of the links used previously have been created since $m < (k+1)/2$. This stage creates exactly either $0$ link or $2n_1 = n$ links among which $z$ links are in $E_1$, $z$ links are in $E_2$, and $n-2z$ links are in $E_{1,2}$. 

\paragraph*{\textbf{Stage} $\mathbf{4}$} (Algorithm, line 38--42) In total, either $m$ or $m+1$ Hamiltonian cycles have been created and $e_{11}$ links have been used. We thus create the remaining $m_{12}$ Hamiltonian cycles with links exclusively in $E_{1,2}$ that have not been created in the previous stages. Here, $m_{12}=(k+1)/2-m$ if $z\leq 0$ and $m_{12}=(k+1)/2-m-1$ otherwise.  A possible solution for $k<n_1$ is as follows. For all $t$ that satisfy $m < t < (k+1)/2$, we construct a Hamiltonian cycle following this pattern:
for any $1 \leq i \leq n_1$, we create links $(i; (i+t \mod n_1) +n_1)$ and $((i+t) \mod n_1 +n_1; (i+1)\mod n_1)$ in the network.

The above 4 stages of construction yield an equilibrium two-layer secure IoT-enabled infrastructure network.

\begin{figure}[!t]
\begin{centering}
\includegraphics[width=1\columnwidth]{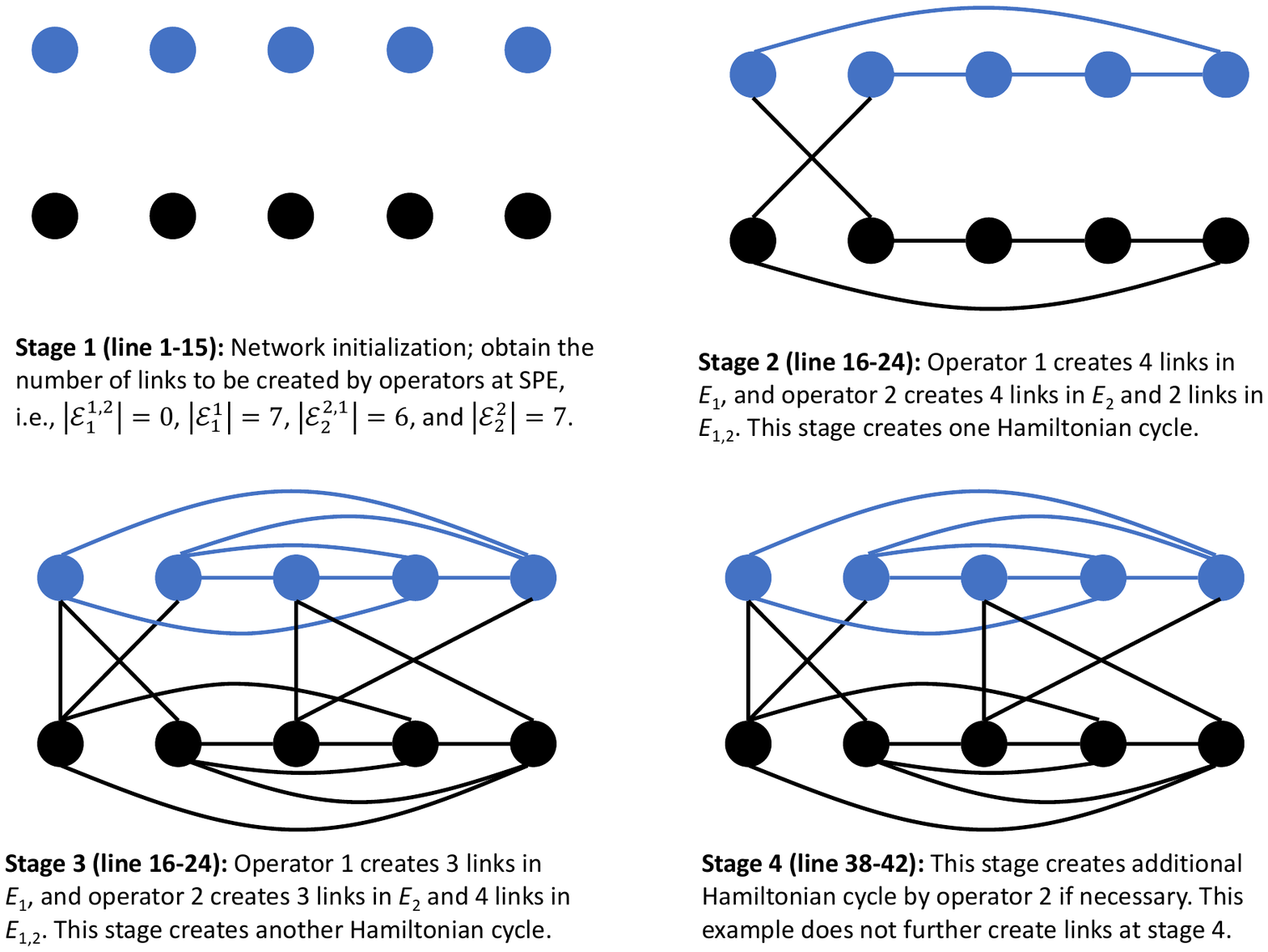}
\par\end{centering}     
\caption{\label{example_5_node}
Illustration of the secure network construction process using Algorithm \ref{algo}. Both subnetworks contain 5 nodes. Blue-colored links are created by operator 1, and black-colored links are established by operator 2. The network constructed at SPE is resistant to any 3 link attacks.  Stage 1 of the algorithm yields the number of links that each operator needs to create at the SPE. Stage 2-4 constructs the secure network that satisfies the requirements. In this example, one Hamiltonian cycle is created at both stages 2 and 3, while no further links are constructed at stage 4, as network after stage 3 is resistant to 3 link attacks.}
\end{figure}

\textit{An Illustrative Example:}
We next present an example to illustrate how the developed algorithm works. The parameters are summarized as follows: $n_1=n_2=5$, $k=3$, $c_1=c_2 = \frac{1}{20}$, and $c_{1,2} = c_{2,1} = \frac{1}{10}$. Fig. \ref{example_5_node} depicts the detailed steps of the secure network construction process. In stage 1 of the algorithm (line $1-15$), we obtain the following: $e_{12} =|{\cal E}_1^{1,2} | = 0$, $e_{11} = |{\cal E}_1^1| = 7$, which means that operator 1 creates 0 link in $E_{1,2}$ and 7 links in $E_1^1$ at SPE. Furthermore, operator 2 creates  6 links in $E_{1,2}$ and 7 links in $E_2$ at SPE. Hence, the net utilities of operators 1 and 2 at SPE are $\frac{13}{20}$ and $\frac{1}{20}$, respectively. Next, stage 2-4 of the algorithm creates such a secure infrastructure network which is resistant to $k=3$ link removal attacks. Specifically, at stage 2 (line $16-24$), two network operators create an independent Hamiltonian cycle, where operator 1 creates 4 links in $E_1$ and operator 2 creates 4 links in $E_2$ and 2 links in $E_{1,2}$. After this stage, the network is resistant to 1 link attack.  At stage 3 (line $25-37$), two operators create an additional Hamiltonian cycle in our example, after which each node is of degree 4 and thus the network is resistant to 3 link attacks. At this stage, operator 1 further creates 3 links in $E_1$ and operator 2 creates 3 links in $E_2$ and 4 links in $E_{1,2}$. Both operators at stage 4 of the algorithm (line 38-42) do not construct links, as there is no remaining Hamilton cycle to be further created in $E_{1,2}$. This example shows that the algorithm can successfully guide the design of secure interdependent networks.

\subsection{Complexity and Scalability Analysis}
We proceed to analyze the property of the designed algorithm in this section. The first one is the complexity. In the construction of secure networks based on Algorithm \ref{algo}, we allocate the links in two networks sequentially. Depending on the system parameters, the total number of allocated links is achieved between $\ceil*{\frac{(n_1+n_2)(k+1)}{2}}$ and $\ceil*{\frac{n_1(k+1)}{2}}+\ceil*{\frac{n_2(k+1)}{2}}+k+1$. Note that $\ceil*{\frac{(n_1+n_2)(k+1)}{2}}$ is a lower bound in the sense that it is the minimum number of links for a network containing $n_1+n_2$ nodes being resistant to $k$ attacks. Similarly, $\ceil*{\frac{n_1(k+1)}{2}}+\ceil*{\frac{n_2(k+1)}{2}}+k+1$ is an upper bound for a secure network where each subnetwork is resistant to $k$ attacks and another $k+1$ links are allocated between two subnetworks. Therefore, the complexity of Algorithm \ref{algo} is linear in the number of nodes and the number of attacks, i.e., $O(nk)$, where $n=n_1+n_2$. 

We next analyze the scalability of the algorithm. Scalability is critical when the system parameter changes and the equilibrium network becomes different and needs to be reconfigured. In our algorithm, the main procedure in designing the secure network is constructing Hamiltonian cycles sequentially in both subnetworks and between two layers. This construction pattern is highly scalable in the sense that when the number of nodes in two subnetworks or the number of attacks changes, the solution infrastructure network at SPE adopts with a similar topology and only a small set of existing links need to be rewired. The scalability of the algorithm will be further justified using case studies in Section \ref{number_agents}.

\section{Case Studies} \label{case_study}
In this section, we use case studies of IoBT to illustrate the optimal design principles of secure  IoT-enabled networks with heterogeneous components. In a battlefield scenario, the unmanned ground vehicles (UGV) and unmanned aerial vehicles (UAV) execute missions together. To enhance the information transmission quality and situational awareness of each agent in the battlefield, a secure and reliable communication network resistant to malicious attacks is critical. Note that the results in this section are also applicable to other mission-critical heterogeneous IoT network applications.

\subsection{Secure Interdependent Network Design at SPE}\label{base_case_study}
In the following case studies, we consider $n_1=9$ UAVs and $n_2=9$ UGVs in the two-layer IoBT network. The normalized costs of creating different types of links are as follows: $c_1 = \frac{1}{30}$, $c_{1,2} = \frac{1}{20}$, $c_2=\frac{1}{45}$, and $c_{2,1} = \frac{2}{45}$. Here, we can see that the cost of creating intralinks across two layers is more than that of creating interlinks within the network itself. In addition, the normalized unit cost of attack is $c_A = \frac{1}{3}$, and hence the attacker can compromise at most $k=3$ links in the network. The above normalized costs can be transformed to their nominal monetary costs when the basis ratio is determined. Based on Lemma \ref{lem:4}, we obtain that, at SPE, the UAV network operator 1 creates 10 interlinks within its own network, and the UGV network operator 2 formulates 10 interlinks as well as 16 intralinks between two layers in the IoBT. Therefore, the equilibrium payoffs for operators 1 and 2 are $U_1^* = \frac{5}{6}$ and $U_2^* = \frac{1}{15}$, respectively. Note that the equilibrium IoBT network is a 3-connected network, and thus the attacker is incapable of disconnecting the system even with his best effort. By using the designed Algorithm \ref{algo}, we construct the solution IoBT network resistant to $3$ attacks and the result is shown in Fig. \ref{case_A}. We can verify that operator 2 not only allocates link resources in his own UGV network but also in the places connecting two interdependent layers. Furthermore, each node in the network is of degree 4, and the network is resistant to anticipated attacks.

\subsection{Impact of the Number of Attacks}
We next investigate the impact of the number of attacks on the adversarial IoBT network formation. Varying $k$ captures the operators' belief on the attacker's incentives. A larger $k$ indicates that the operators anticipating a higher level of threats, and thus the designed IoBT network needs to be more resistant.  In the following, the link creation costs are the same as those in the previous case study. We vary the attacker's capability $k$, and the obtained results are shown in Fig. \ref{case_B}. Fig. \ref{case_B_1} illustrates the number of formed links of network operators 1 and 2 at the equilibrium IoBT configuration. When the attacker can compromise less than 2 links, the UGV network operator creates sufficient interlinks that connect UAVs and UGVs. Therefore, the utility of UAV network operator is 1. As the number of attacks increases, operator 1 begins to contribute to the network defense because operator 2 alone cannot secure the network with a positive payoff. For $2\leq k<7$, operator 1 allocates link resources only within the UAV network. In comparison, operator 2 creates fewer intralinks and allocates more resources in its own UGV network as the cyber threats increase. In addition, when the number of attacks exceeds a certain level, i.e., $k\geq 7$ in this case, both network operators will cease to protect the network, and the corresponding SPE is a null strategy which satisfies the second condition in Lemma \ref{lem:5}. Fig. \ref{case_B_2} shows the utilities of two operators at the equilibrium IoBT network. The operator 1's payoff decreases as $k$ grows. Interestingly, in the regime where the UAV network operator contributes to the secure IoBT network, i.e.,  $2\leq k \leq 6$, the utility of UGV network operator remains the same which corresponds to the maximum effort that operator 2 can use. Based on this case study, we can conclude that higher threat levels induce more collaborations between two network operators. 

\begin{figure}[!t]
\begin{centering}
\includegraphics[width=0.9\columnwidth]{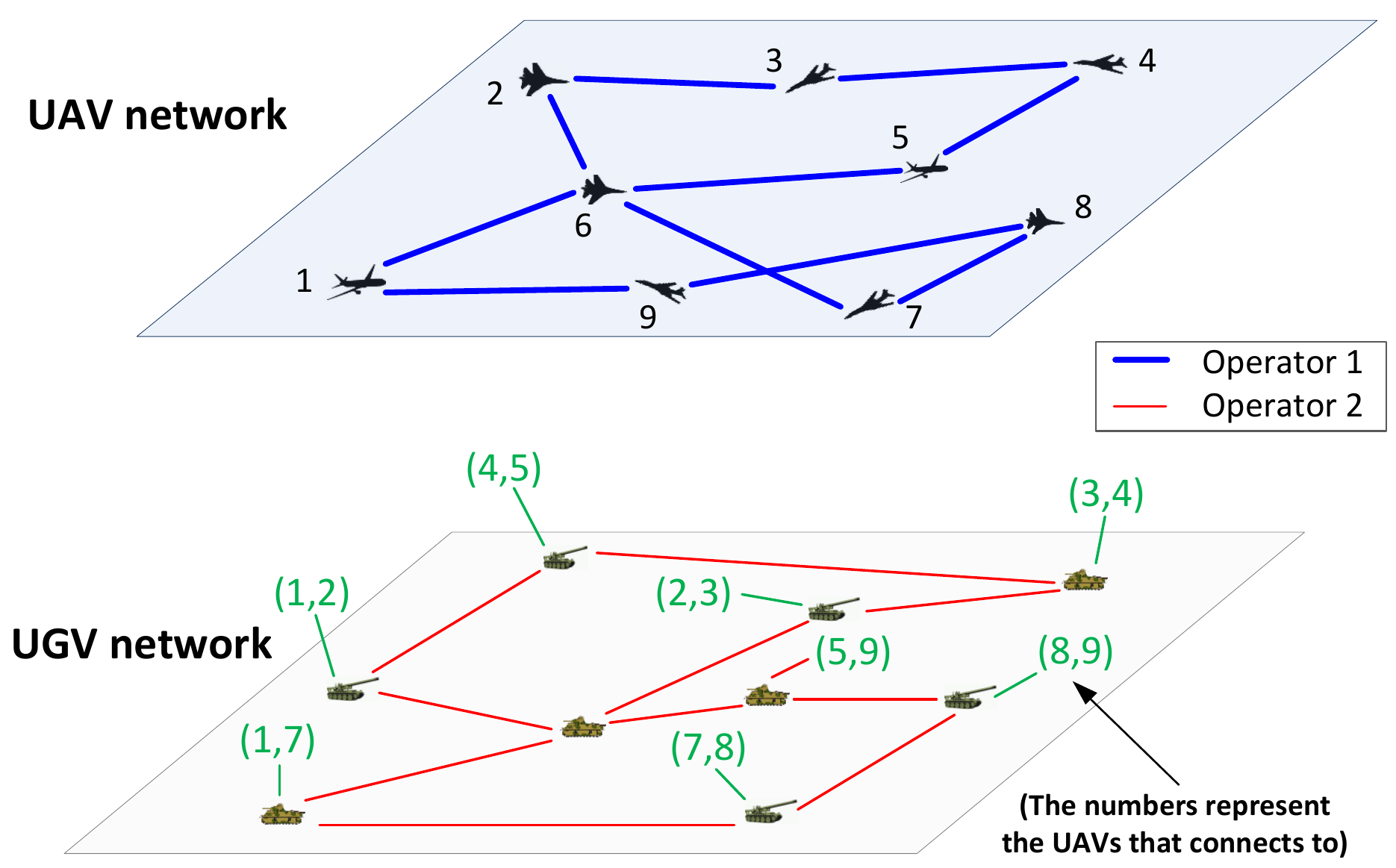}
\par\end{centering}     
\caption{\label{case_A}
The 3-connected equilibrium IoBT network.}
\end{figure}

\begin{figure}[!t]
  \centering
  \subfloat[Number of links in the equilibrium network]{
    \includegraphics[width=0.85\columnwidth]{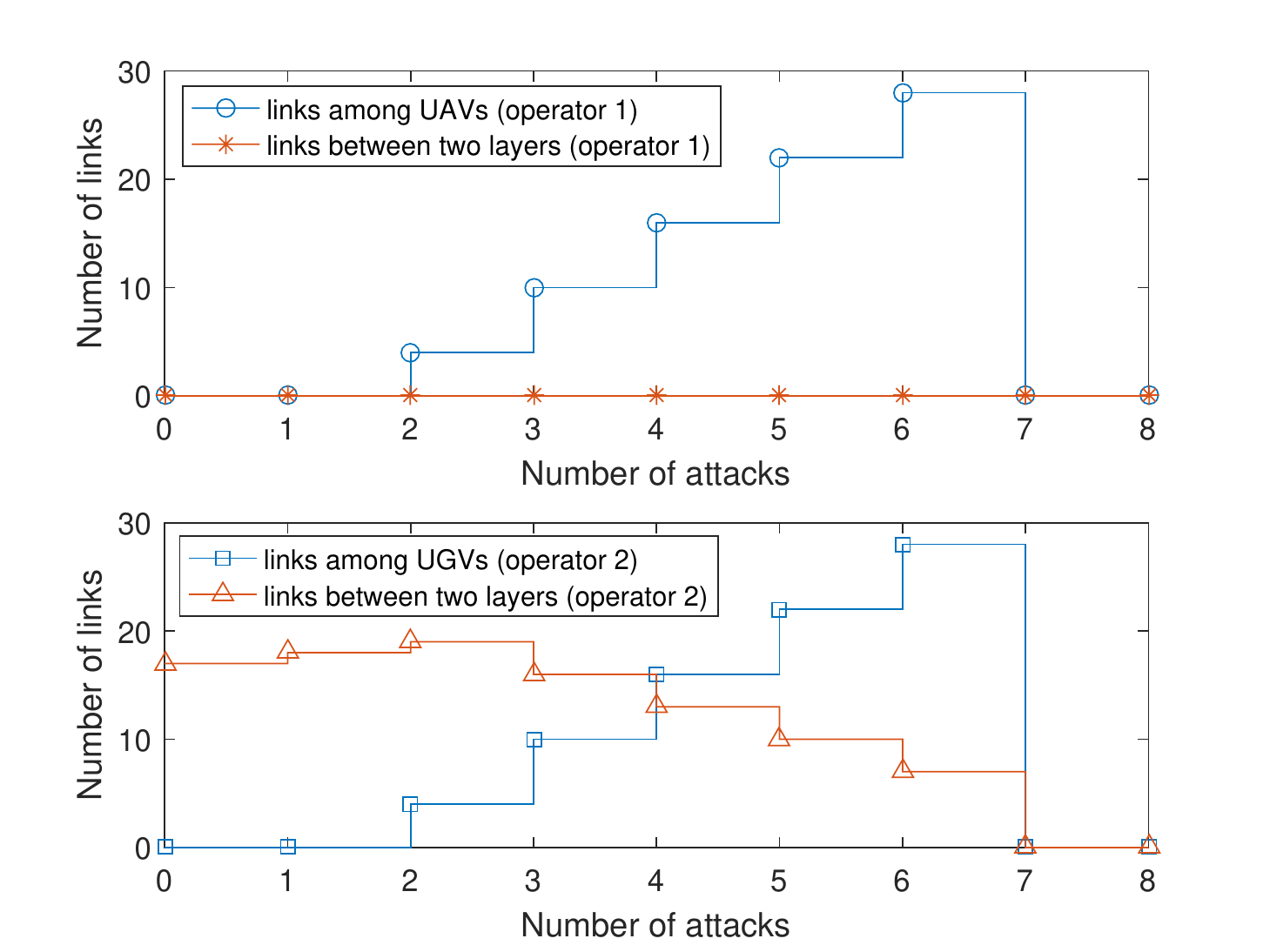}\label{case_B_1}}\\ 
	 \subfloat[Equilibrium utility of operators]{
    \includegraphics[width=0.85\columnwidth]{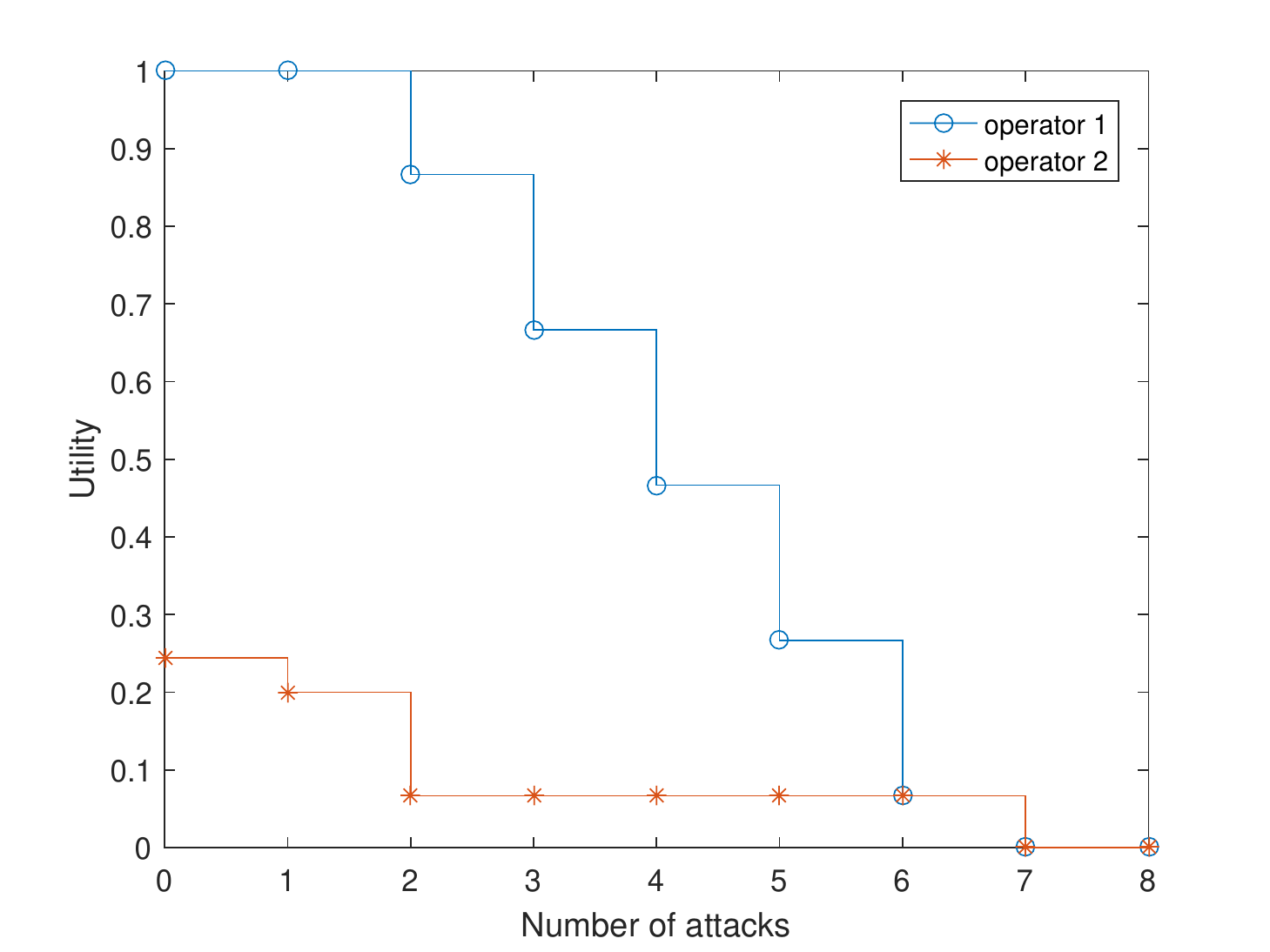}\label{case_B_2}}
  \caption[]{(a) shows the number of different types of links that operators 1 and 2 should create at the equilibrium IoBT network. (b) is the corresponding utility of two operators. }
  \label{case_B}

\end{figure}

\subsection{Impact of the Number of Agents}\label{number_agents}
To illustrate the scalability of the designed algorithm, we next present the results with different  number of agents in the network. This scenario is practical in mission-critical applications as the number of agents may change over time. The system parameters are the same as those in Section \ref{base_case_study} except that $n_1$ and $n_2$ are different (agents leaving or joining the battlefield). Note that as $n_1$ and $n_2$ change, the IoBT configurations at SPE are also different. We present the results when some agents in two subnetworks leave the battlefield, i.e., $n_1=8$ and $n_2 = 8$. At the SPE, operator 1 creates 7 interlinks within its network, and operator 2 formulates 7 interlinks and 18 intralinks between two layers in the IoBT. The equilibrium payoffs for operators 1 and 2 are $U_1^* = \frac{23}{30}$ and $U_2^* = \frac{2}{45}$, respectively. Figure \ref{case_C} shows the constructed network using Algorithm \ref{algo}. We can see that comparing with the result in Fig. \ref{case_A}, the links associated with those leaving agents are removed and link rewiring only happens to a small set of nodes originally connected with leaving agents. The major network configuration stays unchanged which demonstrates the scalability of the algorithm.

\begin{figure}[!t]
\begin{centering}
\includegraphics[width=0.9\columnwidth]{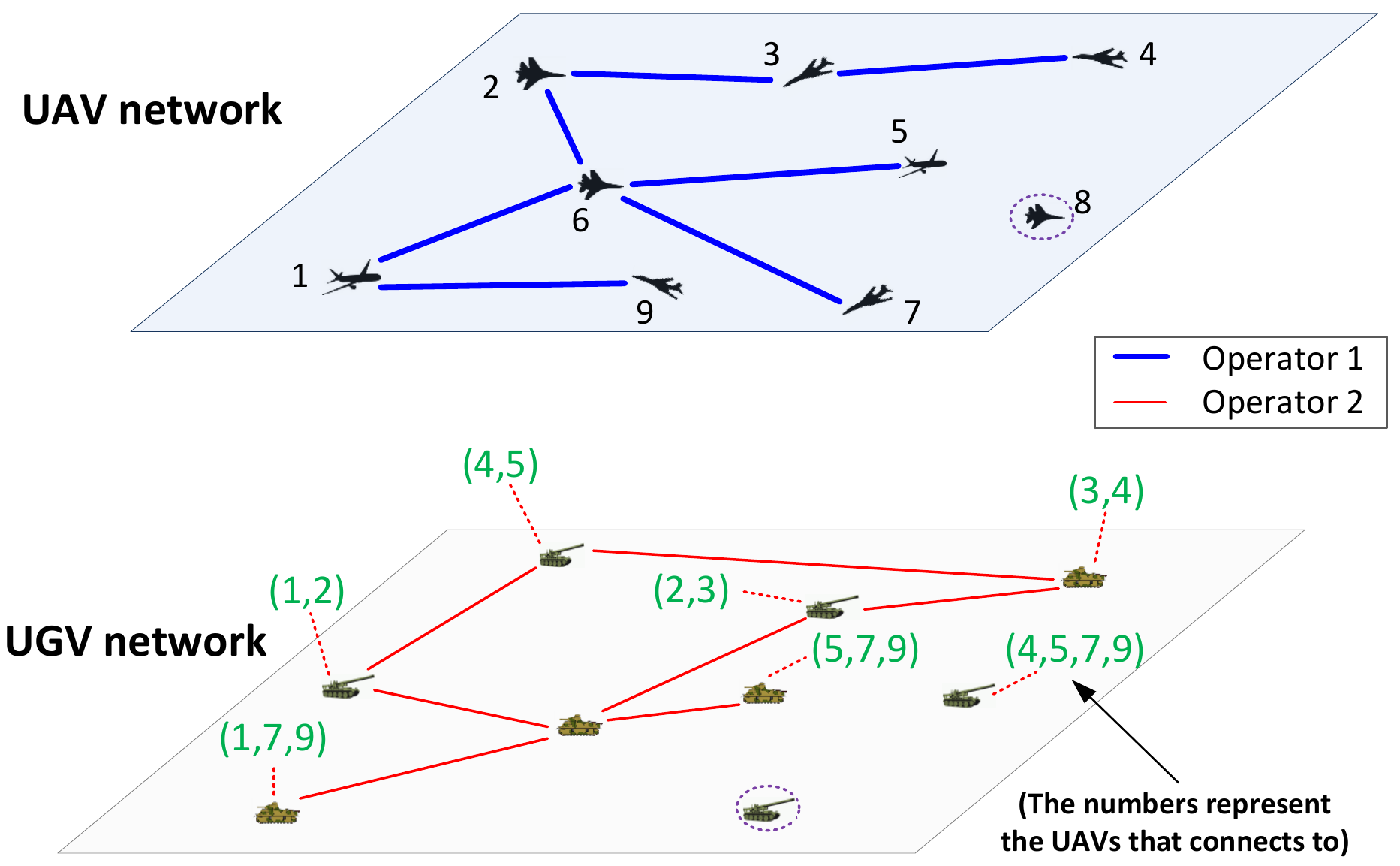}
\par\end{centering}   
\caption{\label{case_C}
The 3-connected equilibrium IoBT network with two agents leaving the field. Comparing with Fig. \ref{case_A}, only a small subset of links need to be rewired which shows the scalability of the algorithm.}
\end{figure}

\section{Conclusion}\label{conclusion}
In this paper, we have investigated the adversarial network design for the interdependent IoT-enabled infrastructures. To secure the heterogeneous components in the infrastructure networks, we have formulated a three-player three-stage network formation game where two network operators aim to keep the network connected in the presence of attacks. The subgame perfect Nash equilibrium (SPE) of the game has been shown to be an empty set when the number of communication links that the attacker can compromise exceeds a threshold, or the link creations are too costly for the operators. The price of anarchy (PoA), i.e., the ratio of network formation costs between the SPE and team optimal strategies, can be unbounded. Also, we have identified cases where SPE solution is as efficient as the team optimal one, yielding PoA equaling 1. Furthermore, with a higher threat level, we have shown that two network operators are more willing to collaborate to defend against attacks, since one operator alone cannot completely mitigate the threats with a limited amount of link resources. The future work would extend the dynamic game model and investigate the case in which network operators can create secure links and insecure links with distinct costs. Under such modeling, the adversary also has heterogeneous attack costs to different types of communication links.

\bibliographystyle{IEEEtran}
\bibliography{IEEEabrv,references}

\end{document}